\numberwithin{equation}{section}
\theoremstyle{plain}
\newtheorem{theorem}{Theorem}
\newtheorem{corollary}[theorem]{Corollary}
\newtheorem{proposition}[theorem]{Proposition}
\newtheorem{remark}[theorem]{Remark}
\begin{document}

\begin{center}
  \Large \bf Transform Analysis for Hawkes Processes with Applications in Dark Pool Trading
\end{center}

\author{}
\begin{center}
{Xuefeng
  Gao}\,\footnote{Corresponding author. Department of Systems
    Engineering and Engineering Management, The Chinese University of Hong Kong, Shatin, N.T. Hong Kong;
    xfgao@se.cuhk.edu.hk},
    {Xiang Zhou}\,\footnote{Department of Systems
    Engineering and Engineering Management, The Chinese University of Hong Kong, Shatin, N.T. Hong Kong;
    zhouxng@se.cuhk.edu.hk},
  Lingjiong Zhu\,\footnote{Department of Mathematics, Florida State University, 1017 Academic Way, Tallahassee, FL-32306, United States of America; zhu@math.fsu.edu.
  }
\end{center}

\begin{abstract}
Hawkes processes are a class of simple point processes that are self-exciting and have
clustering effect, with wide applications in finance, social networks
and many other fields. This paper considers a self-exciting Hawkes process where
the baseline intensity is time-dependent, the exciting function is a general function and the jump sizes of the intensity process are independent and identically distributed non-negative random variables. This Hawkes model is non-Markovian in general. We obtain closed-form formulas for the Laplace transform, moments and the distribution of the Hawkes process. To illustrate the applications of our results, we use the Hawkes process to model the clustered arrival of trades in a dark pool and analyze various performance metrics including time-to-first-fill, time-to-complete-fill and the expected fill rate of a resting dark order.
\end{abstract}

%

\section{Introduction}

Consider a positive sequence of event arrival times $\tau_{1}<\tau_{2}<\cdots$,
that are defined on a complete probability space $(\Omega,\mathcal{F},\mathbb{P})$
with right-continuous and complete
information filtration $(\mathcal{F}_{t})_{t\geq 0}$.
We define a counting process $N$ and an associated point process $L$ as
\begin{equation*}
N_{t}=\sum_{n=1}^{\infty}1_{\tau_{n}\leq t} \quad \text{and} \quad L_{t}=\sum_{n=1}^{\infty}\ell_{n}\cdot 1_{\tau_{n}\leq t},
\end{equation*}
where $\{ \ell_{n}: n \ge 1 \}$ is a sequence of independent and identically distributed (i.i.d.) non-negative random variables, and $\ell_{n}$ is $\mathcal{F}_{\tau_{n}}$-measurable for each $n\in\mathbb{N}$.

We consider $\{N_{t}: t \ge 0\}$ to be a Hawkes process with random jump sizes in the intensity, that is a simple
point process $N$ with a stochastic intensity given by
\begin{equation}
\lambda_{t}=\mu(t)+\int_{0}^{t-}h(t-s)dL_{s} = \mu(t) + \sum_{0<\tau_i<t}  h(t-\tau_i) \cdot \ell_i , \label{eq:dynamics}
\end{equation}
where $\mu(\cdot)\geq 0$ is the time-dependent baseline intensity and $h(\cdot):\mathbb{R}_{\geq 0}\rightarrow\mathbb{R}_{\geq 0}$
is the exciting function encoding the influence of past events on the intensity
and we always assume that $\Vert h\Vert_{L^{1}}=\int_{0}^{\infty}h(t)dt<\infty$
and $h$ is locally bounded.
In the theory of point processes \citep{Daley}, the random jump sizes $\ell_{i}$ are sometimes referred to
as the random marks associated with the point process, and the point process
with intensity \eqref{eq:dynamics} is a marked Hawkes process.

Two special cases of this Hawkes model have been well studied in the literature. First,
when $\ell_i \equiv 1$ for each $i$, the counting process $N$ is the classical linear Hawkes process introduced by A.G. Hawkes in 1971 \citep{Hawkes, Hawkes71II}.
Hawkes process exhibits both self--exciting (i.e., the occurrence
of an event increases the probabilities of future
events) and clustering properties. It generalizes the standard Poisson process. Hence Hawkes process is very appealing in point process modeling and it has wide applications
in finance. This includes modeling of clustering behavior in stock trade arrivals, default clustering in portfolio credit risk and financial contagion, high-frequency stock prices, etc.
See, e.g., \citet{ZhuThesis, Bacry2015, Jaisson} and references therein for details.

Second, when the exciting function $h$ is exponential, i.e.,
$h(t)= \delta e^{-\kappa t}$ for $t \ge 0$, where $\delta, \kappa >0$, \citet{Errais} studied the transforms and distributions of this Hawkes process with i.i.d. jumps $\{\ell_{i}\}$ and a special time-dependent baseline intensity in the form of $\mu(t) = \mu+e^{-\kappa t} (\lambda_0 -\mu)$. In this case,
the two-dimensional process $(\lambda, N)$
is Markovian. \citet{Errais} used this Markovian Hawkes process to model the clustering of corporate defaults, where the random jump times $\tau_i$ represent default times, and the intensity jump magnitudes $\ell_i$ represent the random losses at default. In particular, the intensity model \eqref{eq:dynamics} captures the empirical feature that the larger the financial loss of a defaulted firm, the larger the impact of such a event on the other firms, and the
bigger the increase of the default intensity at an event.
Relying on the Dynkin formula, the authors of \citet{Errais} characterized the Fourier transform and the distribution of the Hawkes process using ODEs, and they apply these results in a range of applications in portfolio credit risk, including the valuation, hedging and calibration of portfolio credit derivatives.

This paper considers a Hawkes process with intensity in \eqref{eq:dynamics} where the exciting function $h$ is a general function,
the baseline intensity is time-dependent, and the random jump sizes $\{ \ell_{n}: n \ge 1 \}$ are i.i.d. nonnegative random variables.
We pursue this extended Hawkes model for two reasons: first, we would like to extend the transform analysis of Markovian Hawkes processes in \citet{Errais} to
the general setting which allows a general time--dependent baseline intensity to account for non--stationarity such as intraday seasonalities in trading activities and non-exponential exciting functions to account for possibly non-Markovian dynamics;
second, our motivating application in dark pool trading, which will be illustrated later, naturally fits this general Hawkes model.

In our setting, the Hawkes process can be non-Markovian as a result of the general exciting function $h(\cdot)$.
Relying on the immigration-birth representation of linear Hawkes processes given in \citet{HawkesII}, and
in particular \citet{Karabash} for marked linear Hawkes processes, we obtain closed-form formulas for the Laplace transform, moments and the distribution of the Hawkes process $(N, L)$ via integral equations. In the special case of an exponential exciting function, we recover the results obtained in \citet{Errais}.

The closed-form formulas of transforms and the probability distribution of Hawkes processes generate computational tractability, and they provide insights into the behavior of Hawkes processes.
They could be useful in applications in finance and other fields where event occurrences exhibit self-exciting and clustering. In this paper,
we apply our theoretical results to analyze the performance of dark pools.

Dark pools are automated trading facilities which do not display bid and ask quotes to the public, hence they can be used to reduce the
market impact of trading big orders. There are around 40 active dark pools in the U.S. for equity trading. Dark pools now account for about 15\% of the trading volume in the U.S. equity market and about $7\%$ in Europe.
See, e.g., \citet{Mittal2008, ZhuHX2014} for an overview.
We focus on a typical ``midpoint"  dark pool using a continuous matching mechanism, where participants
submit buy or sell orders with specified quantities for a particular security.
Trades can occur at any time if there is
liquidity on both sides of the market, and the matching price is the midpoint of the best bid and offer on transparent exchanges. If an
investor rests an order in a pool for some time and the order is not completely filled, then the remaining quantity may be cancelled and
submitted to a different dark pool or an exchange to seek liquidity.

Several theoretical and empirical studies have suggested that the liquidity in dark pool is clustered, i.e., ``liquidity begets liquidity". This
means that once a trade has occurred in a dark pool, the probability of observing another one increases. See, e.g.,
\citet{Buti2011}, Chapter~3 in \citet{Lehalle2013} and \citet{Markov2013} for details. Various market events can lead to trade clustering in dark pools.
For example, an institutional investor who trades and gets a fill from a particular pool can re-route his orders from another venue back to this pool. In addition, high
frequency traders in the market who are fishing in the dark pool may also notice the existence of a big order from a trade occurrence and
they may also come to trade in this pool \citep{Mittal2008}. The clustering of liquidity suggests that strategic traders form liquidity expectations from either their own trades or post-trade information even in the absence of pre-trade market transparency, and this allows them to design liquidity seeking algorithms that exploit the clustered arrivals of liquidity to maximize the fill rate of their orders. It also suggests that in fragmented markets, orders can migrate quickly from one venue to another.
A natural model to capture the clustering behavior of trade arrivals in dark pools is the Hawkes process.
Indeed, the classical Hawkes process with $\ell_n \equiv 1$ has been widely used to model
clustering of trade arrivals on transparent exchanges in the literature.
See, e.g., \citet{Bowsher2007, BacryMuzy2014, Cartea2014, AJ, Bacry2015} and references therein.

We consider an investor who rests a large midpoint peg (buy) order in a given dark venue, where the execution price of the order floats with the market at the mid-quote derived from transparent exchanges. As in previous studies \citep{Afeche2014, Kratz2015}, we consider a time-priority rule where orders from counterparties
are matched on a first-come-first-served basis\footnote{Matching rules or allocation mechanisms of dark pools are typically complex, partly confidential and frequently updated \citep{Ye2011}. Time-priority matching rule is used by, e.g., BATS Europe Dark Book, see \citet{Liquidmetrix}. Besides time-priority matching, many dark pools use some form of pro-rata matching \citep{ZhuHX2014}. This matching rule is different from the model we consider here and we leave the study of it for future research.}. We model the execution process
of the investor's resting midpoint order by a Hawkes process $(N, L)$ where $\{\tau_i\}$ represent the arrival times of the consolidated trades (eligible-to-match sell orders) from other players in
the pool and the random variables $\{\ell_i\}$ represent the sizes of arriving trades which may not be a constant. Empirically, it has been
observed that the distribution of resting liquidity in dark pools has fatter tails than exponential distributions. See, e.g., \citet{Ganchev2010}. This
implies that the larger the size of a trade, the more likely it is that there is more quantity remaining in the pool. Hence, liquidity seekers or
high frequency traders may be attracted to put more dark orders to the pool after a trade's occurrence, leading to a bigger increase of the trading intensity at a
trade's occurrence.  Such a feature of positive liquidity feedback could be captured by the self-exciting intensity model \eqref{eq:dynamics}.
In the special case when $h \equiv 0,$ the self-exciting behavior disappears and the point process $L$ modeling the cumulative arriving volume of dark trades
reduces to a compound Poisson process. For tractability purposes, in this paper we do not consider other order attributes such as limit price or minimum execution size which can be attached to a midpoint order as anti-gaming and risk management tools.

Using the transform formulas we obtain for the Hawkes model $(N, L)$, we can efficiently compute performance quantities including time-to-first-fill, time-to-complete-fill, and the expected fill rate in a given time window for a midpoint peg order placed at an empty dark pool. We also analyze the probability of obtaining another fill and the
expected fill size conditioned on there is an initial fill of the midpoint order, to understand liquidity expectations after an occurrence of a trade. Furthermore, we extend our analysis to study non-empty dark
pools. The performance quantities we
compute represent major performance characteristics of dark pools around liquidity \citep{Mittal2007, Afeche2014}.
They could help give investors a guide to maximize fills and liquidity
opportunities from dark pools, and indicate whether and where to trade in a fragmented financial market with multiple dark pools.
Hence, such performance quantities are important for smart order routing and allocation of liquidity among different pools to reduce
market impact and execution costs in portfolio trading. See, e.g., \citet{Mittal2007, Ganchev2010, Laruelle2011} for
detailed discussions.

\bigskip
\textbf{Related literature.} Two streams of research that are closely related to our work are Hawkes processes and dark pools. We now explain the difference between our study and the existing literature in these two areas.

\textit{Hawkes processes.}
The majority of the works on Hawkes processes in the literature
assume a constant baseline intensity $\mu(\cdot)\equiv\mu$.
The case when the baseline intensity and/or the exciting function
are time-dependent is much less studied. In a recent work, 
\citet{Euch2016} obtained the characteristic function of a multivariate Hawkes process $N$ with a time-dependent baseline intensity. They did not consider random jump sizes in the intensity. \citet{RSS} studied the properties of a locally stationary Hawkes process
with both the baseline intensity and exciting function being time-dependent.
See also \citet{Toke} for the estimations of Hawkes processes
with time-dependent baseline intensities and \citet{KL} with time-dependent exciting function
and zero baseline intensity for various applications. Both \citet{Toke} and \citet{RSS} also
used constant jump sizes, while \citet{KL} considered random jump sizes.

Several papers have considered the Hawkes process where the intensity process has random jump sizes as our paper.
Almost all of them remain in the Markovian framework.
In \citet{Dassios2011}, the authors studied a dynamic contagion process by combining the Markovian Hawkes model with i.i.d. intensity jump sizes with externally-excited jumps. They characterized distributional properties of this new process. 
\citet{Errais} and \citet{Zhang2015} studied generalized Markovian Hawkes processes, or affine point processes, where the intensity is an affine function of an affine jump-diffusion.
These models belong to the class of affine processes studied in \citet{Duffie}.
In all these works, the (generalized) Hawkes models are still Markovian. One work that deviates from the Markovian framework,
with time-dependent baseline intensity and random jump sizes, similar as this paper, is \citet{Lee2016}, where the jump size of the intensity is modulated by a stochastic process described by a stochastic differential equation. They proposed
new simulation and model fitting algorithms for the Hawkes model, but they did not obtain distributional properties.
The special case $\mu(t)\equiv\mu$ of our model also belongs to the class of the Hawkes process
with random marks, see. e.g. \citet{Bremaud2002} who studied
the power spectrum, and \citet{Karabash} who studied the limit theorems
and we refer to Section 2.1.1 of \citet{Bacry2015} for more references.

\textit{Dark pools.} In the dark pool literature, our work is closely related to studies including \citet{Markov2013} and \citet{Afeche2014}. The paper \citet{Markov2013} from the industry explicitly modeled the clustering of trade arrivals in a dark pool using the classical Hawkes process with $\ell_n \equiv 1$. They discussed estimation of this classical Hawkes model using exponential exciting functions. \citet{Afeche2014} used a double--sided queueing model to study the operational characteristics of dark pools. They considered Poisson order arrivals and obtained closed-form results for system-level and order-level performances such as fill rates and system times.
Our work focuses on the order-level performance, i.e., the experience of a single resting midpoint order placed at a dark venue. We consider more general Hawkes arrival process to capture the clustering behavior of order arrivals. Incorporating Hawkes processes to study system-level performance of dark pools is left for future work. Our work also complements other studies on dark pools, see, e.g. \citet{Ganchev2010, Laruelle2011, Almgren} for order routing algorithms among multiple pools, \citet{Klock2011, gatheral2013, Kratz2014, Kratz2015} for optimal portfolio trading strategies and price manipulation issues in the presence of a dark pool and a lit exchange,
\citet{Hendershott2000} for the conditions under which investors should use a dark pool versus a traditional trading venue, and \citet{Buti2011, ZhuHX2014, Iyer2015} for effects of dark pool trading on the market quality and welfare analysis.

\textbf{Organization of this paper.}
The rest of the paper is organized as follows. In Section~\ref{sec:2}, we state the main result on the joint Laplace transform of the Hawkes model $(N_T, L_T)$ for a fixed $T>0$. Relying on this result, we obtain explicit formulas for the first two moments of $N_{T}$ and $L_{T}$. We also compute analytically
 the probability mass function
of $N_{T}$ and also that of $L_{T}$ when the jump sizes $\{\ell_i\}$
are lattice distributed.
In Section~\ref{sec:application}, we apply the main results to analyze performance problems arising from trading in dark pools. Section~\ref{sec:conclusion} concludes.
Some technical proofs are collected in the Appendix.

\section{Main results} \label{sec:2}

In this section we present the main results. Throughout this section, we use $\mathbb{C}$ to denote the set of complex numbers, $\mathcal{R}(\theta)$ to denote the real part of a complex number $\theta \in \mathbb{C}$, and $|\theta|$ to denote its modulus.

The key mathematical result is the following joint Laplace transform of the Hawkes process $(N_T, L_T)$ for fixed $T>0.$

\begin{theorem}\label{MainThm}
For any $\theta_{1},\theta_{2}\in\mathbb{C}$ with $\mathcal{R}(\theta_{1})\geq 0$, $\mathcal{R}(\theta_{2})\geq 0$,
\begin{equation}\label{eq:laplace}
\mathbb{E}[e^{-\theta_{1}N_{T}-\theta_{2}L_{T}}]
=e^{\int_{0}^{T}\mu(T-s)(F(s)-1)ds},
\end{equation}
where the function $F$ is the unique solution to the integral equation
\begin{equation} \label{eq:F}
F(t)=e^{-\theta_{1}}\mathbb{E}\left[e^{-\theta_{2}\ell_{1}+\int_{0}^{t}\ell_{1}h(s)(F(t-s)-1)ds}\right],
\end{equation}
with $|F(t)| \le 1$ for $t \in [0, T]$.
\end{theorem}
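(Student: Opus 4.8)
The plan is to use the immigration--birth (Poisson cluster) representation of the marked linear Hawkes process, following \citet{HawkesII} and \citet{Karabash}. First I would realize $(N,L)$ on $[0,T]$ as a branching structure: immigrants arrive according to an inhomogeneous Poisson process of rate $\mu(\cdot)$ on $[0,T]$; each event (immigrant or descendant) located at time $\sigma$ carries an i.i.d.\ mark $\ell\sim\ell_1$ and spawns direct offspring according to an inhomogeneous Poisson process of rate $\ell\,h(\cdot-\sigma)$ on $(\sigma,T]$; iterating recursively builds the full family trees, and $N_T$ counts all points in $[0,T]$ while $L_T$ sums their marks. One first checks that each cluster is almost surely finite: if $m(t)$ denotes the expected number of points in a cluster with ``remaining horizon'' $t$, then Campbell's formula gives $m(t)=1+\mathbb{E}[\ell_1]\int_0^t h(s)m(t-s)\,ds\le 1+\mathbb{E}[\ell_1]\big(\sup_{[0,T]}h\big)\int_0^t m(r)\,dr$, and Gronwall yields $m(t)<\infty$, using that $h$ is locally bounded (and $\mathbb{E}[\ell_1]<\infty$).

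Next I would define $F(t)=\mathbb{E}[e^{-\theta_1 C_t-\theta_2 S_t}]$, where $(C_t,S_t)$ is the number of points and the sum of their marks in a single cluster generated by one ancestor with remaining horizon $t$. Since $C_t\ge 1$ and $S_t\ge 0$ are a.s.\ finite and $\mathcal{R}(\theta_1),\mathcal{R}(\theta_2)\ge 0$, the bound $|F(t)|\le 1$ is automatic. Conditioning on the ancestor's mark $\ell$ and on its direct-offspring process, and using that the subclusters rooted at distinct offspring are independent copies of clusters with the corresponding remaining horizons, the exponential formula for marked Poisson processes ($\mathbb{E}[\prod_i g(x_i)]=\exp(\int(\mathbb{E}[g(x)]-1)\,\nu(dx))$) gives
\[
\mathbb{E}\!\left[e^{-\theta_1 C_t-\theta_2 S_t}\,\middle|\,\ell\right]=e^{-\theta_1}e^{-\theta_2\ell}\exp\!\left(\int_0^t \ell\,h(s)\big(F(t-s)-1\big)\,ds\right),
\]
and taking expectation over $\ell\sim\ell_1$ produces exactly \eqref{eq:F}. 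Applying the same exponential formula at the level of the immigrant process, where an immigrant at time $u$ contributes an independent copy of $(C_{T-u},S_{T-u})$, gives $\mathbb{E}[e^{-\theta_1 N_T-\theta_2 L_T}]=\exp\!\big(\int_0^T\mu(u)(F(T-u)-1)\,du\big)$, and the substitution $s=T-u$ yields \eqref{eq:laplace}.

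For uniqueness within the class $|F|\le 1$ on $[0,T]$: if $F_1,F_2$ both solve \eqref{eq:F} with $|F_i|\le1$, then $\mathcal{R}(F_i(t-s)-1)\le 0$ and $\mathcal{R}(\theta_2)\ge0$, so the exponents inside the expectation have non-positive real part; combining $|e^{a}-e^{b}|\le|a-b|$ for $\mathcal{R}(a),\mathcal{R}(b)\le0$ with $|e^{-\theta_1}|\le1$ and $|e^{-\theta_2\ell_1}|\le1$ gives
\[
|F_1(t)-F_2(t)|\le \mathbb{E}[\ell_1]\int_0^t h(s)\,|F_1(t-s)-F_2(t-s)|\,ds\le \mathbb{E}[\ell_1]\big(\sup_{[0,T]}h\big)\int_0^t |F_1(r)-F_2(r)|\,dr,
\]
and Gronwall forces $F_1\equiv F_2$. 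The same contraction estimate, run as a Picard iteration started from $F\equiv1$ (which preserves $|F|\le1$ by the argument above), re-proves existence and the bound purely analytically, independently of the probabilistic construction.

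I expect the main obstacle to be the rigorous justification of the recursive cluster identity with complex parameters: one must set up the random family trees carefully (their measurability, the precise sense in which the subclusters rooted at offspring at time $v$ are i.i.d.\ copies of a cluster with horizon $t-v$ by time-homogeneity of the offspring kernel $h(\cdot-\sigma)$, and the exchange of expectation with the infinite recursion), and verify that the marked Poisson exponential formula applies with a complex-valued ``test function'' --- which is legitimate only because that function is bounded by $1$ in modulus and the relevant intensities are finite. The change of variables and the Gronwall uniqueness are then routine.
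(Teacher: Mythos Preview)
Your proposal is correct and follows the same route as the paper: the immigration--birth (cluster) representation, the recursive derivation of \eqref{eq:F} by conditioning on the root's mark and its direct offspring, the Poisson exponential formula at the immigrant level for \eqref{eq:laplace}, and Gronwall for uniqueness. One pleasant refinement over the paper: your complex Lipschitz bound $|e^a-e^b|\le|a-b|$ for $\mathcal{R}(a),\mathcal{R}(b)\le 0$ (via $e^a-e^b=(a-b)\int_0^1 e^{ta+(1-t)b}\,dt$) gives the constant $\mathbb{E}[\ell_1]\,\sup_{[0,T]}h$ directly, whereas the paper separates real and imaginary parts and lands on the larger (but still sufficient) constant $3\,\mathbb{E}[\ell_1]\,\sup_{[0,T]}h$.
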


The Equation~\eqref{eq:F} is a Hammerstein--type Volterra integral equation, and it can be quickly solved numerically using, for example, piecewise polynomial collocation methods. See e.g. Chapter~2 in \citet{Brunner2004} for further details.

\begin{remark}
We show in this remark that we recover the transform of Hawkes processes in \citet{Errais} for an exponential exciting function.
Note when $h(x)= \delta e^{-\kappa x}$ with $\delta, \kappa>0$, \citet{Errais} derived that (Proposition 2.2 in their paper)
with a baseline intensity $\mu(t) = \mu + e^{- \kappa t} (\lambda_{0}-\mu)$ where $\lambda_0 \ge \mu >0$,
\begin{equation*}
\mathbb{E}[e^{-\theta_{1}N_{T}-\theta_{2}L_{T}}]
=\exp \left( B(T) + \lambda_0 \cdot A(T) \right),
\end{equation*}
where the functions $A(\cdot)$ and $B(\cdot)$ satisfy the ODEs
\begin{align}
A'(t) &= - \kappa A(t) -1 + f(\delta A(t) - \theta_2) e^{-\theta_1},\label{AtEqn}\\
B'(t) &= \kappa \mu A(t), \label{eq:Bt}
\end{align}
with $A(0)=B(0)=0$, and $f$ is defined by
$f(\omega) := \mathbb{E} [e^{\omega \cdot \ell_1}]$ for $\omega \in \mathbb{C}$.
Thus using \eqref{eq:Bt} we obtain
\begin{equation}\label{eq:Gie}
\mathbb{E}[e^{-\theta_{1}N_{T}-\theta_{2}L_{T}}]
=\exp \left(\mu \int_0^T \kappa A(t)dt + \lambda_0 \cdot A(T) \right).
\end{equation}
We prove that our result is consistent with the result \eqref{eq:Gie} from \citet{Errais}.
To see this, we first note that when $h(x)= \delta e^{-\kappa x}$, we obtain from Theorem~\ref{MainThm} that the function $F$ is given by
\begin{align}
F(t)&=e^{-\theta_{1}}\mathbb{E}\left[e^{-\theta_{2}\ell_{1}+\int_{0}^{t}\ell_{1} \delta e^{-\kappa (t-s)} (F(s)-1)ds}\right] \nonumber\\
&=e^{-\theta_{1}} f\left( -\theta_{2} + \int_{0}^{t}  \delta e^{-\kappa (t-s)} (F(s)-1)ds\right). \label{eq:F1}
\end{align}
In view of \eqref{eq:laplace}, \eqref{eq:Gie} and the expression of the baseline intensity $\mu(t)$, it suffices to show that
\begin{equation}\label{eq-siam}
\mu \int_0^T \kappa A(t)dt + \lambda_0 \cdot A(T) =  \int_{0}^{T}(\mu + e^{- \kappa (T-s)} (\lambda_{0}-\mu))(F(s)-1)ds.
\end{equation}
To this end, we first prove that for $t \in [0, T],$
\begin{equation}
F(t) -1
= \kappa A(t) + A'(t)\label{FA}
= -1 + f(\delta A(t) - \theta_2) e^{-\theta_1},
\end{equation}
where the second equality above is due to \eqref{AtEqn}.
That is, we need to show for $t \in [0, T],$
\begin{equation} \label{eq:F2}
F(t) = f(\delta A(t) - \theta_2) e^{-\theta_1}.
\end{equation}
In view of \eqref{eq:F} and the fact that $A(0)=0$, this equation clearly holds when $t=0$.
Let us verify that \eqref{eq:F2} is indeed the unique solution for \eqref{eq:F1}. We write for $t \in [0, T],$
\begin{equation} \label{eq:xt}
x(t) := \int_{0}^{t}  e^{-\kappa (t-s)} (F(s)-1)ds.
\end{equation}
Taking derivative at both sides, we find that
\begin{equation*}
x'(t)= -\kappa x(t) + F(t) -1.
\end{equation*}
Now Equation~\eqref{eq:F1} implies that
\begin{equation} \label{eq:F-inter}
F(t)=  e^{-\theta_{1}} f( -\theta_{2} +  \delta x(t) ).
\end{equation}
Hence $x$ solves the ODE
\begin{equation*}
x'(t)= -\kappa x(t) -1 +   e^{-\theta_{1}} f( -\theta_{2} +  \delta x(t) ).
\end{equation*}
As one can see from \eqref{AtEqn}, this is exactly the ODE that $A$ satisfies. Since $A(0)=x(0)=0$, we obtain
\begin{eqnarray}\label{eq:XA}
A(t) \equiv x(t), \quad \text{for $t \in [0,T]$.}
\end{eqnarray}
Then \eqref{eq:F2} readily follows from \eqref{eq:F-inter}. In addition, we infer from \eqref{eq:XA} and \eqref{eq:xt} that
\begin{equation}\label{part1}
(\lambda_{0}-\mu) \cdot A(T) = (\lambda_{0}-\mu) \cdot  \int_0^T e^{- \kappa (T-s)}(F(s)-1)ds.
\end{equation}
Furthermore, Equation~\eqref{FA} implies that
\begin{equation}\label{part2}
\mu \int_0^T \kappa A(t)dt + \mu \cdot A(T) =  \int_{0}^{T}\mu(F(s)-1)ds.
\end{equation}
On combining \eqref{part1} and \eqref{part2}, we obtain \eqref{eq-siam}.
Therefore, we have recovered the result in \citet{Errais}.
\end{remark}

\begin{proof}[Proof of Theorem~\ref{MainThm}]
\citet{HawkesII} first discovered that a linear Hawkes process has an immigration-birth representation.
The immigrants (roots) arrive according to a time-inhomogeneous Poisson process $\bar{N}$ with intensity $\mu(t)\geq 0$ at time $t$.
Each immigrant generates children according to a Galton-Watson tree,
that is, the number of children of each immigrant follows a Poisson distribution
with parameter $\Vert h\Vert_{L^{1}}$, and each child will independently
generate children according to the same Poisson distribution, and so on and so forth.
In addition, when the children are born, they are born at independent random times
with the probability density function $\frac{h(t)}{\Vert h\Vert_{L^{1}}}$ for being born at time $t$.
In other words, they are born according to an inhomogeneous Poisson process with intensity $h(t)$.
Consider an immigrant arrive at time $0$.
Note that in the later computations, we will consider an immigrant that arrives
at a positive time $t$, but the computation is the same as shifting the time backwards
by $t$ to consider the immigrant that arrives at time $0$.
Let $K$ be the number
of children of this immigrant and $\ell_{1}$ be the associated jump size.
Let $S_{t, j}$ be the number
of the total descendants of the $j$-th child of the immigrant that were born before time $t$, including the $j$-th child,
and $L_{t, j}$ be the sum of all of jump sizes associated with
all the descendants of the $j$-th child, including $j$-th child, where $1\leq j\leq K$.
Let $S_{t}$ be the total number of all the descendants of this immigrant that arrives at time $0$
including the immigrant, and let $L_{t}^{S}$ be the associated sum of jump sizes,
that is $S_{t}=1+\sum_{j=1}^{K}S_{t, j}$ and $L_{t}^{S}=\ell_{1}+\sum_{j=1}^{K}L_{t, j}$.
Then, we have
\begin{align}
F(t)&:=\mathbb{E}\left[e^{-\theta_{1}S_{t}-\theta_{2}L_{t}^{S}}\right] \label{eq:F-def}
\\
&=\sum_{k=0}^{\infty}\mathbb{E}\left[e^{-\theta_{1}S_{t}-\theta_{2}L_{t}^{S}}|K=k\right]\mathbb{P}(K=k)
\nonumber
\\
&=\mathbb{E}\left[e^{-\theta_{1}-\theta_{2}\ell_{1}}
\sum_{k=0}^{\infty}\prod_{i=1}^{k}\mathbb{E}\left[e^{-\theta_{1}S_{t, i}-\theta_{2}L_{t, i}}\right]\mathbb{P}(K=k|\ell_{1})\right]
\nonumber
\\
&=\mathbb{E}\left[e^{-\theta_{1}-\theta_{2}\ell_{1}}
\sum_{k=0}^{\infty}\left(\mathbb{E}\left[e^{-\theta_{1}S_{t, 1}-\theta_{2}L_{t,1 } }\right]\right)^{k}\mathbb{P}(K=k|\ell_{1})\right]
\nonumber
\\
&=\mathbb{E}\left[e^{-\theta_{1}-\theta_{2}\ell_{1}}
\sum_{k=0}^{\infty}\left(\int_{0}^{t}\frac{h(s)}{\Vert h\Vert_{L^{1}}}F(t-s)ds\right)^{k}
e^{-\ell_{1}\Vert h\Vert_{L^{1}}}\frac{\ell_{1}^{k}\Vert h\Vert_{L^{1}}^{k}}{k!}\right]
\nonumber
\\
&=e^{-\theta_{1}}\mathbb{E}\left[e^{-\theta_{2}\ell_{1}+\int_{0}^{t}\ell_{1}h(s)(F(t-s)-1)ds}\right],
\nonumber
\end{align}
where the third and fourth equalities above use the tower property of the conditional expectation,
and the fact that $(S_{t, i},L_{t, i})$ are i.i.d. independent of $K$,
and the fifth equality above uses the fact that conditional on $\ell_{1}$,
$K$ is Poisson distributed with parameter $\ell_{1}\Vert h\Vert_{L^{1}}$
and conditional on the children being born at time $s$,
$e^{-\theta_{1}S_{t, 1}-\theta_{2}L_{t , 1}}$ has the expectation
$F(t-s)$ by the definition of $F(\cdot)$, and the timing
of the children being born at time $s$ has the probability density function $\frac{h(s)}{\Vert h\Vert_{L^{1}}}$.

Next, by the immigration-birth representation, we have
$N_{T}=\sum_{i:0<\bar{\tau}_{i}  \le T} S_{T-\bar \tau_{i}}(i)$,
and $L_{T}=\sum_{i:0<\bar{\tau}_{i}  \le T}L_{T- \bar \tau_{i}}^{S}(i)$,
where $\bar{\tau}_{i}$ are the arrival times of the time-inhomogeneous Poisson process $\bar{N}$
and $S_{T-t}(i)$ are i.i.d. copies of $S_{T-t}$,  and $L_{T-t}^{S}(i)$ are i.i.d. copies of $L_{T-t}^{S}$,
where $S_{T-t}$ and $L_{T-t}^{S}$ are defined as before.
Thus, we have
\begin{equation*}
\mathbb{E}\left[e^{-\theta_{1}N_{T}-\theta_{2}L_{T}}\right]
=\mathbb{E}\left[e^{\sum_{i:0<\bar{\tau}_{i} \le T}(-\theta_{1}S_{T-\bar \tau_{i}}(i)-\theta_{2}L_{T- \bar \tau_{i}}^{S}(i))}\right].
\end{equation*}
Hence, we have
\begin{align*}
\mathbb{E}\left[e^{-\theta_{1}N_{T}-\theta_{2}L_{T}}\right]
&=\mathbb{E}\left[\prod_{i:0<\bar{\tau}_{i}\leq T}e^{-\theta_{1}S_{T-\bar{\tau}_{i}}(i)-\theta_{2}L_{T-\bar{\tau}_{i}}^{S}(i)}
\right]
\\
&=\mathbb{E}\left[\mathbb{E}\left[\prod_{i:0<\bar{\tau}_{i}\leq T}e^{-\theta_{1}S_{T-\bar{\tau}_{i}}(i)-\theta_{2}L_{T-\bar{\tau}_{i}}^{S}(i)}\bigg|\mathcal{F}_{T}^{\bar{N}}\right]\right],
\end{align*}
where we used the tower property and $\mathcal{F}_{T}^{\bar{N}}$ is the natural filtration
generated by $\bar{N}$ process on the time interval $[0,T]$. 
Conditional on $\mathcal{F}_{T}^{\bar{N}}$, $(S_{T-\bar{\tau}_{i}},L_{T-\bar{\tau}_{i}})$
are independent. Thus, we have
\begin{align*}
\mathbb{E}\left[e^{-\theta_{1}N_{T}-\theta_{2}L_{T}}\right]
&=\mathbb{E}\left[\prod_{i:0<\bar{\tau}_{i}\leq T}
\mathbb{E}\left[e^{-\theta_{1}S_{T-\bar{\tau}_{i}}(i)-\theta_{2}L_{T-\bar{\tau}_{i}}^{S}(i)}
\bigg|\mathcal{F}_{T}^{\bar{N}}\right]\right]
\\
&=\mathbb{E}\left[e^{\sum_{i:0<\bar{\tau}_{i} \le T}
\log\mathbb{E}\left[e^{-\theta_{1}S_{T-\bar{\tau}_{i}}(i)-\theta_{2}L_{T-\bar{\tau}_{i}}^{S}(i)}\big|\mathcal{F}_{T}^{\bar{N}}\right]}\right]
\\
&=\mathbb{E}\left[e^{\sum_{i:0<\bar{\tau}_{i} \le T}
\log\mathbb{E}\left[e^{-\theta_{1}S_{T-\bar{\tau}_{i}}(1)-\theta_{2}L_{T-\bar{\tau}_{i}}^{S}(1)}\big|\mathcal{F}_{T}^{\bar{N}}\right]}\right]
\nonumber
\\
&=e^{\int_{0}^{T}\mu(s)\left(\mathbb{E}\left[e^{-\theta_{1}S_{T-s}(1)-\theta_{2}L_{T-s}^{S}(1)}\right]-1\right)ds}
\nonumber
\\
&=e^{\int_{0}^{T}\mu(s)(F(T-s)-1)ds},
\nonumber
\end{align*}
where the second last equality follows from
the fact that for any deterministic and bounded function $g(\cdot)$,
and the inhomogeneous Poisson process $\bar{N}$ with intensity $\mu(\cdot)$,
we have $\mathbb{E}[e^{\int_{0}^{T}g(s)d\bar{N}_{s}}]=e^{\int_{0}^{T}\mu(s)(e^{g(s)}-1)ds}$,
and the last equality follows from the definition of $F$ in \eqref{eq:F-def}.

Finally, we show that $F$ is the unique solution to the integral equation \eqref{eq:F} satisfying $|F(t)| \le 1$ for all $t \in [0, T].$ The fact that $|F(t)| \le 1$ is clear from the definition \eqref{eq:F-def}.
To show uniqueness, let $F_{1}(t)$ and $F_{2}(t)$ be two solutions of \eqref{eq:F}
so that $|F_{1}(t)|,|F_{2}(t)|\leq 1$ for $t \in [0,T].$
Then we have
\begin{align*}
|F_{1}(t)-F_{2}(t)|
&\leq\mathbb{E}\left[\left|e^{-\theta_{1}}e^{-\theta_{2}\ell_{1}+\int_{0}^{t}\ell_{1}h(s)(F_{1}(t-s)-1)ds}
-e^{-\theta_{1}}e^{-\theta_{2}\ell_{1}+\int_{0}^{t}\ell_{1}h(s)(F_{2}(t-s)-1)ds}\right|\right]
\\
&=\mathbb{E}\left[\left|e^{-\theta_{1}}e^{-\theta_{2}\ell_{1}}\right| \cdot
\left|e^{\int_{0}^{t}\ell_{1}h(s)(F_{1}(t-s)-1)ds}
-e^{\int_{0}^{t}\ell_{1}h(s)(F_{2}(t-s)-1)ds}\right|\right]
\\
&\leq\mathbb{E}\left[\left|e^{\int_{0}^{t}\ell_{1}h(s)(F_{1}(t-s)-1)ds}
-e^{\int_{0}^{t}\ell_{1}h(s)(F_{2}(t-s)-1)ds}\right|\right].
\end{align*}
Let $F_{1}(t)=R_{1}(t)+iI_{1}(t)$ and $F_{2}(t)=R_{2}(t)+iI_{2}(t)$.
Then, we have
\begin{align}\label{FEqn}
&|F_{1}(t)-F_{2}(t)|
\\
&\leq\mathbb{E}\left[\left|e^{\int_{0}^{t}\ell_{1}h(s)(R_{1}(t-s)-1)ds+i\int_{0}^{t}\ell_{1}h(s)I_{1}(t-s)ds}
-e^{\int_{0}^{t}\ell_{1}h(s)(R_{2}(t-s)-1)ds+i\int_{0}^{t}\ell_{1}h(s)I_{1}(t-s)ds}\right|\right]
\nonumber
\\
&\qquad
+\mathbb{E}\left[\left|e^{\int_{0}^{t}\ell_{1}h(s)(R_{2}(t-s)-1)ds+i\int_{0}^{t}\ell_{1}h(s)I_{1}(t-s)ds}
-e^{\int_{0}^{t}\ell_{1}h(s)(R_{2}(t-s)-1)ds+i\int_{0}^{t}\ell_{1}h(s)I_{2}(t-s)ds}\right|\right]
\nonumber
\\
&=\mathbb{E}\left[\left|e^{\int_{0}^{t}\ell_{1}h(s)(R_{1}(t-s)-1)ds}
-e^{\int_{0}^{t}\ell_{1}h(s)(R_{2}(t-s)-1)ds}\right|\right]
\nonumber
\\
&\qquad
+\mathbb{E}\left[e^{\int_{0}^{t}\ell_{1}h(s)(R_{2}(t-s)-1)ds}\left|e^{i\int_{0}^{t}\ell_{1}h(s)I_{1}(t-s)ds}
-e^{i\int_{0}^{t}\ell_{1}h(s)I_{2}(t-s)ds}\right|\right].
\nonumber
\end{align}
Notice that $|R_{1}(t)|\leq |F_{1}(t)|\leq 1$ and $|R_{2}(t)|\leq |F_{2}(t)|\leq 1$,
thus, $\int_{0}^{t}\ell_{1}h(s)(R_{j}(t-s)-1)ds\leq 0$
for $j=1,2$. The map $x\mapsto e^{x}$ is Lipschitz with constant $1$ for $x\leq 0$.
Thus, for any $0\leq t\leq T$.
\begin{align}
\mathbb{E}\left[\left|e^{\int_{0}^{t}\ell_{1}h(s)(R_{1}(t-s)-1)ds}
-e^{\int_{0}^{t}\ell_{1}h(s)(R_{2}(t-s)-1)ds}\right|\right]
&\leq\mathbb{E}\left[\int_{0}^{t}\ell_{1}h(s)|R_{1}(t-s)-R_{2}(t-s)|ds\right]
\nonumber
\\
&\leq
\Vert h\Vert_{L^{\infty}[0,T]}\mathbb{E}[\ell_{1}]\int_{0}^{t}|R_{1}(s)-R_{2}(s)|ds
\nonumber
\\
&\leq
\Vert h\Vert_{L^{\infty}[0,T]}\mathbb{E}[\ell_{1}]\int_{0}^{t}|F_{1}(s)-F_{2}(s)|ds,
\label{REqn}
\end{align}
where $\Vert h\Vert_{L^{\infty}[0,T]}=\sup_{0\leq s\leq T}h(s)$.

Next, let us notice that for any $x,y\in\mathbb{R}$,
\begin{equation*}
|e^{ix}-e^{iy}|
\leq|\cos(x)-\cos(y)|+|\sin(x)-\sin(y)|
\leq 2|x-y|.
\end{equation*}
Therefore,
\begin{align}
&\mathbb{E}\left[e^{\int_{0}^{t}\ell_{1}h(s)(R_{2}(t-s)-1)ds}\left|e^{i\int_{0}^{t}\ell_{1}h(s)I_{1}(t-s)ds}
-e^{i\int_{0}^{t}\ell_{1}h(s)I_{2}(t-s)ds}\right|\right]
\nonumber
\\
&\leq
\mathbb{E}\left[\left|e^{i\int_{0}^{t}\ell_{1}h(s)I_{1}(t-s)ds}
-e^{i\int_{0}^{t}\ell_{1}h(s)I_{2}(t-s)ds}\right|\right]
\nonumber
\\
&\leq 2\mathbb{E}\left[\int_{0}^{t}\ell_{1}h(s)|I_{1}(t-s)-I_{2}(t-s)|ds\right]
\nonumber
\\
&\leq
2\Vert h\Vert_{L^{\infty}[0,T]}\mathbb{E}[\ell_{1}]\int_{0}^{t}|I_{1}(s)-I_{2}(s)|ds
\nonumber
\\
&\leq
2\Vert h\Vert_{L^{\infty}[0,T]}\mathbb{E}[\ell_{1}]\int_{0}^{t}|F_{1}(s)-F_{2}(s)|ds.
\label{IEqn}
\end{align}
Hence, by applying \eqref{REqn} and \eqref{IEqn} to \eqref{FEqn}, we get
\begin{equation*}
|F_{1}(t)-F_{2}(t)|
\leq
3\Vert h\Vert_{L^{\infty}[0,T]}\mathbb{E}[\ell_{1}]\int_{0}^{t}|F_{1}(s)-F_{2}(s)|ds.
\end{equation*}
By Gronwall's inequality, we conclude that $F_{1}\equiv F_{2}$ on $[0,T]$. The proof is complete.
\end{proof}

By letting $\theta_{1}=0$ or $\theta_{2}=0$ in Theorem~\ref{MainThm}, we get the single Laplace transforms
of the counting process $N_{T}$ and the point process $L_{T}$.

\begin{corollary}\label{SingleTransform}
(i) For any $\theta\in\mathbb{C}$ with $\mathcal{R}(\theta)\geq 0$,
\begin{equation*}
\mathbb{E}\left[e^{-\theta N_{T}}\right]
=e^{\int_{0}^{T}\mu(T-s)(F_{N}(s)-1)ds},
\end{equation*}
where the function $F_{N}$ is the unique solution to the integral equation
\begin{equation}\label{FN}
F_{N}(t)=e^{-\theta}\mathbb{E}\left[e^{\int_{0}^{t}\ell_{1}h(s)(F_{N}(t-s)-1)ds}\right],
\end{equation}
with $|F_{N}(t)|\leq 1$ for $0\leq t\leq T$.

(ii) For any $\theta\in\mathbb{C}$ with $\mathcal{R}(\theta)\geq 0$,
\begin{equation} \label{eq:L-lap}
\mathbb{E}\left[e^{-\theta L_{T}}\right]
=e^{\int_{0}^{T}\mu(T-s)(F_{L}(s)-1)ds},
\end{equation}
where the function $F_{L}$ is the unique solution to the integral equation
\begin{equation}\label{FL}
F_{L}(t)=\mathbb{E}\left[e^{-\theta\ell_{1}+\int_{0}^{t}\ell_{1}h(s)(F_{L}(t-s)-1)ds}\right],
\end{equation}
with $|F_{L}(t)|\leq 1$ for $0\leq t\leq T$.
\end{corollary}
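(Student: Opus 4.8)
The plan is to obtain both parts directly as special cases of Theorem~\ref{MainThm}; no new argument is required beyond checking that the hypotheses transfer.

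For part (i), I would apply Theorem~\ref{MainThm} with $\theta_{1}=\theta$ and $\theta_{2}=0$. Since $\mathcal{R}(\theta)\geq 0$ and $\mathcal{R}(0)=0\geq 0$, the requirement $\mathcal{R}(\theta_{1})\geq 0$, $\mathcal{R}(\theta_{2})\geq 0$ is met, so \eqref{eq:laplace} and \eqref{eq:F} are available. With $\theta_{2}=0$ the left-hand side of \eqref{eq:laplace} is $\mathbb{E}[e^{-\theta N_{T}}]$, and the integral equation \eqref{eq:F} reduces to $F(t)=e^{-\theta}\mathbb{E}[e^{\int_{0}^{t}\ell_{1}h(s)(F(t-s)-1)ds}]$, which is exactly equation~\eqref{FN}. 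Identifying $F_{N}\equiv F$ and carrying over both the bound $|F(t)|\leq 1$ on $[0,T]$ and the uniqueness statement from Theorem~\ref{MainThm} yields the claim.

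For part (ii) the reasoning is symmetric: I would apply Theorem~\ref{MainThm} with $\theta_{1}=0$ and $\theta_{2}=\theta$. Then \eqref{eq:laplace} becomes $\mathbb{E}[e^{-\theta L_{T}}]=e^{\int_{0}^{T}\mu(T-s)(F(s)-1)ds}$, and \eqref{eq:F} collapses to $F(t)=\mathbb{E}[e^{-\theta\ell_{1}+\int_{0}^{t}\ell_{1}h(s)(F(t-s)-1)ds}]$, which is equation~\eqref{FL}; setting $F_{L}\equiv F$ and inheriting the bound and uniqueness completes the proof. Since the corollary is a pure specialization, there is no genuine obstacle; the only point deserving a line of care is verifying that the single-variable domain condition $\mathcal{R}(\theta)\geq 0$ is consistent with the two-variable hypothesis once one argument is set to zero, which it is.
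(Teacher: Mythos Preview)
Your proposal is correct and matches the paper's own justification: the corollary is obtained precisely by setting $\theta_{2}=0$ (for part (i)) or $\theta_{1}=0$ (for part (ii)) in Theorem~\ref{MainThm}, with the bound $|F|\le 1$ and uniqueness inherited directly. There is nothing to add.
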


\begin{remark}
The result of the single Laplace transform of $N_{T}$ has been obtained in \citet{Karabash}
by using the immigration-birth representation as a special case of the linear marked Hawkes process.
\end{remark}

The Laplace transforms obtained allow us to explicitly compute the moments of the counting process $N_{T}$ and the point process $L_{T}$. We derive the first and second moments in the following result and present the proof in the Appendix.
Higher order moments can be derived similarly.

\begin{proposition} \label{prop:moments}
(i) The first moment of the counting process $N$ is given by
\begin{equation*}\label{eq:E-NT}
\mathbb{E}[N_{T}]=\int_{0}^{T}\mu(T-t)\psi_{1}(t)dt,
\end{equation*}
where $\psi_{1}$ is the unique solution to the equation:
\begin{equation}\label{psi1}
\psi_{1}(t)
=1+\int_{0}^{t}\mathbb{E}[\ell_{1}]h(t-s)\psi_{1}(s)ds,
\qquad
0\leq t\leq T.
\end{equation}

(ii) The first moment of the process $L$ is given by
\begin{equation*} \label{eq:ELT}
\mathbb{E}[L_{T}]=\mathbb{E}[\ell_{1}]\int_{0}^{T}\mu(T-t)\psi_{1}(t)dt.
\end{equation*}

(iii) The second moment of the counting process $N$ is given by
\begin{equation*}
\mathbb{E}[N_{T}^{2}]=\int_{0}^{T}\mu(T-t)\psi_{2}(t)dt+\left(\int_{0}^{T}\mu(T-s)\psi_{1}(t)dt\right)^{2},
\end{equation*}
where $\psi_{2}$ is the unique solution
to the equation:
\begin{equation}\label{psi2}
\psi_{2}(t)
=(\psi_{1}(t))^{2}
+\int_{0}^{t}\mathbb{E}[\ell_{1}]h(s)\psi_{2}(t-s)ds,
\qquad
0\leq t\leq T.
\end{equation}

(iv) The second moment of the process $L$ is given by
\begin{equation*}
\mathbb{E}[L_{T}^{2}]
=\int_{0}^{T}\mu(T-t)\psi_{3}(t)dt+(\mathbb{E}[\ell_{1}])^{2}\left(\int_{0}^{T}\mu(T-t)\psi_{1}(t)dt\right)^{2},
\end{equation*}
where $\psi_{1}$ is defined in \eqref{psi1} and $\psi_{3}$ is the unique solution
to the equation:
\begin{equation}\label{psi3}
\psi_{3}(t)
=\mathbb{E}[\ell_{1}^{2}](\psi_{1}(t))^{2}
+\int_{0}^{t}\mathbb{E}[\ell_{1}]h(s)\psi_{3}(t-s)ds,
\qquad
0\leq t\leq T.
\end{equation}
\end{proposition}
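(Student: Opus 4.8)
The plan is to read the moments off the single Laplace transforms of Corollary~\ref{SingleTransform} by differentiating them in the transform variable $\theta$ at $\theta=0$, and to translate the resulting identities for the auxiliary functions $F_N$ and $F_L$ into the linear Volterra equations \eqref{psi1}--\eqref{psi3}. Throughout I fix $T>0$ and assume $\mathbb{E}[\ell_1^2]<\infty$; this guarantees $\mathbb{E}[N_T^2],\mathbb{E}[L_T^2]<\infty$ via the immigration-birth representation used in the proof of Theorem~\ref{MainThm}, since a single descendant tree over a horizon of length $T$ has a finite second moment whenever the offspring law $\mathrm{Poisson}(\ell_1\Vert h\Vert_{L^{1}})$ does.

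The expansion is anchored at $\theta=0$: at $\theta=0$ both \eqref{FN} and \eqref{FL} are solved by the constant function $1$, so by the uniqueness assertion of Corollary~\ref{SingleTransform} we have $F_N(\cdot;0)\equiv F_L(\cdot;0)\equiv 1$. For the first moments, differentiating $\mathbb{E}[e^{-\theta N_T}]=\exp(\int_0^T\mu(T-s)(F_N(s;\theta)-1)ds)$ once in $\theta$ and evaluating at $0$ (where the exponential factor is $1$) gives $\mathbb{E}[N_T]=-\int_0^T\mu(T-s)\partial_\theta F_N(s;0)ds$. Differentiating \eqref{FN} in $\theta$, exchanging $\partial_\theta$ with $\mathbb{E}[\cdot]$ and with $\int_0^t(\cdot)ds$, and using $F_N(\cdot;0)\equiv 1$ to collapse the exponent, produces $\partial_\theta F_N(t;0)=-1+\mathbb{E}[\ell_1]\int_0^t h(s)\partial_\theta F_N(t-s;0)ds$, so $\psi_1:=-\partial_\theta F_N(\cdot;0)$ solves \eqref{psi1} and (i) follows. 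The identical computation applied to \eqref{eq:L-lap}--\eqref{FL} gives $\partial_\theta F_L(t;0)=-\mathbb{E}[\ell_1]+\mathbb{E}[\ell_1]\int_0^t h(s)\partial_\theta F_L(t-s;0)ds$, so $-\partial_\theta F_L(\cdot;0)$ solves \eqref{psi1} scaled by $\mathbb{E}[\ell_1]$ and hence equals $\mathbb{E}[\ell_1]\psi_1$ by uniqueness, which is (ii).

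For the second moments, write $\mathbb{E}[e^{-\theta N_T}]=e^{A(\theta)}$ with $A(\theta)=\int_0^T\mu(T-s)(F_N(s;\theta)-1)ds$, so $\mathbb{E}[N_T^2]=A''(0)+A'(0)^2$; here $A'(0)=-\mathbb{E}[N_T]$ and $A''(0)=\int_0^T\mu(T-s)\partial_\theta^2 F_N(s;0)ds$, so it remains to identify $\psi_2:=\partial_\theta^2 F_N(\cdot;0)$. Writing the exponent in \eqref{FN} as $\ell_1 J(t;\theta)$ with $J(t;\theta)=\int_0^t h(s)(F_N(t-s;\theta)-1)ds$ — so that $F_N(t;\theta)=e^{-\theta}\mathbb{E}[e^{\ell_1 J(t;\theta)}]$ — one differentiates twice in $\theta$ and evaluates at $0$, where $J(\cdot;0)\equiv 0$, $\partial_\theta J(t;0)=-\int_0^t h(s)\psi_1(t-s)ds$ and $\partial_\theta^2 J(t;0)=\int_0^t h(s)\psi_2(t-s)ds$; collecting the contributions (one from the factor $e^{-\theta}$, the others from the first and second moments of $\ell_1$) and simplifying with \eqref{psi1}, which replaces $\int_0^t h(s)\psi_1(t-s)ds$ by $(\psi_1(t)-1)/\mathbb{E}[\ell_1]$, yields a linear Volterra equation of the second kind for $\psi_2$; reading off the forcing term gives \eqref{psi2}, hence (iii). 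Part (iv) is entirely parallel: from $\mathbb{E}[e^{-\theta L_T}]=e^{B(\theta)}$ one obtains $\mathbb{E}[L_T^2]=B''(0)+B'(0)^2$ with $B'(0)=-\mathbb{E}[L_T]$ and $B''(0)=\int_0^T\mu(T-s)\partial_\theta^2 F_L(s;0)ds$, and differentiating \eqref{FL} twice while using $-\partial_\theta F_L(\cdot;0)=\mathbb{E}[\ell_1]\psi_1$ shows $\psi_3:=\partial_\theta^2 F_L(\cdot;0)$ solves \eqref{psi3}, with the coefficient $\mathbb{E}[\ell_1^2]$ appearing because for $L$ the variable $\theta$ couples to $\ell_1$ through the extra factor $e^{-\theta\ell_1}$.

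Existence and uniqueness of $\psi_1,\psi_2,\psi_3$ on $[0,T]$ is immediate: each is a linear Volterra integral equation of the second kind with the common convolution kernel $\mathbb{E}[\ell_1]h(\cdot)$, which is integrable and locally bounded, so a unique bounded continuous solution exists by successive approximation (or by Gronwall's inequality applied to the difference of two solutions), exactly as in the uniqueness argument in the proof of Theorem~\ref{MainThm}. The step I expect to be the genuine obstacle is not the bookkeeping but the analytic justification of the formal differentiations: one must show that $\theta\mapsto F_N(t;\theta)$ and $\theta\mapsto F_L(t;\theta)$ are twice differentiable at $\theta=0$ uniformly in $t\in[0,T]$, that $\partial_\theta$ commutes with $\mathbb{E}[\cdot]$ and with the time integral in \eqref{FN}--\eqref{FL}, and that the first two one-sided $\theta$-derivatives of the Laplace transforms at $0$ equal $-\mathbb{E}[N_T]$ and $\mathbb{E}[N_T^2]$ respectively (and likewise for $L_T$), the last point by dominated convergence using $(1-e^{-\theta x})/\theta\le x$ and the finiteness of the second moments. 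I would establish the smoothness of $F_N$ by treating \eqref{FN} as a fixed-point equation in $C([0,T])$ whose right-hand side is smooth in the parameter $\theta$ and a contraction in $F$ (exactly as in the proof of Theorem~\ref{MainThm}), so that an implicit-function-type argument delivers the joint smoothness and licenses term-by-term differentiation; the dominating functions needed to differentiate under $\mathbb{E}[\cdot]$ are supplied by $|F_N|\le 1$, the sign condition $\int_0^t\ell_1 h(s)(\mathcal{R}F_N-1)ds\le 0$, and $\mathbb{E}[\ell_1^2]<\infty$. A more elementary route that sidesteps the transforms altogether is to set up renewal-type equations directly for the first two moments of the descendant-tree functionals $S_t$ and $L_t^S$ of the immigration-birth representation and then assemble $\mathbb{E}[N_T^k]$, $\mathbb{E}[L_T^k]$ via Campbell's formula for the inhomogeneous-Poisson immigrant stream, which produces \eqref{psi1}--\eqref{psi3} with only Fubini-type interchanges to justify.
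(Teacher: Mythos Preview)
Your approach is essentially the paper's own: differentiate the transforms of Corollary~\ref{SingleTransform} at $\theta=0$, identify $\psi_1=-\partial_\theta F_N|_{0}$, $\psi_2=\partial_\theta^2 F_N|_{0}$, and likewise for $F_L$, then argue uniqueness of \eqref{psi1}--\eqref{psi3} via Gronwall. The paper carries out exactly these differentiations and, like you, does not spell out the analytic justification (differentiability in $\theta$, interchange of $\partial_\theta$ with $\mathbb{E}$ and $\int$); your discussion of implicit-function smoothness, dominated convergence, and the alternative renewal route through $S_t$, $L_t^S$ goes beyond what the paper provides.

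There is, however, one step that will not go through as you describe it, and it is the same slip the paper makes. For part~(iii) you say that after collecting contributions ``from the first and second moments of $\ell_1$'' and simplifying with \eqref{psi1} you ``read off'' the forcing term $(\psi_1(t))^2$. If you actually carry out the second differentiation of \eqref{FN} and set $\theta=0$ (using $F_N(\cdot;0)\equiv 1$ and $\partial_\theta F_N(\cdot;0)=-\psi_1$), the inhomogeneous term in the Volterra equation for $\psi_2$ is
\[
\mathbb{E}\!\left[\Big(1+\ell_1\int_0^t h(s)\psi_1(t-s)\,ds\Big)^{\!2}\right]
=1+2\,\mathbb{E}[\ell_1]\,I(t)+\mathbb{E}[\ell_1^2]\,I(t)^2,
\quad I(t):=\int_0^t h(s)\psi_1(t-s)\,ds,
\]
whereas $(\psi_1(t))^2=(1+\mathbb{E}[\ell_1]I(t))^2=1+2\,\mathbb{E}[\ell_1]\,I(t)+(\mathbb{E}[\ell_1])^2 I(t)^2$. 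These coincide only when $\ell_1$ is degenerate; for genuinely random $\ell_1$ the correct equation for $\psi_2$ carries $\mathbb{E}[\ell_1^2]$, and the Remark after the Proposition---that $\mathbb{E}[N_T^2]$ depends on the law of $\ell_1$ only through $\mathbb{E}[\ell_1]$---fails. Your own parenthetical (``the others from the first and second moments of $\ell_1$'') was the correct instinct; do not simplify that term away. Parts (i), (ii), (iv) and the uniqueness argument are fine and match the paper.
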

\begin{remark}
It follows from Proposition~\ref{prop:moments}
that the first moments $\mathbb{E}[N_{T}]$ and $\mathbb{E}[L_{T}]$
and also the second moment $\mathbb{E}[N_{T}^{2}]$
depend on the distribution of $\ell_{1}$ only via the mean $\mathbb{E}[\ell_{1}]$,
and the second moment $\mathbb{E}[L_{T}^{2}]$
depends on the distribution of $\ell_{1}$ only via the $\mathbb{E}[\ell_{1}]$, $\mathbb{E}[\ell_{1}^{2}]$,
the first two moments of $\ell_{1}$.
\end{remark}

Using the Laplace transform of $N_{T}$, one can also compute the probability
mass function of $N_{T}$ analytically, as shown in the following result.
The proof relies on the celebrated Fa\`{a} di Bruno's formula, and it is given in the appendix.

\begin{proposition}\label{prop:mass}
We have $\mathbb{P}(N_{T}=0)=e^{-\int_{0}^{T}\mu(T-s)ds}$,
and for any $k\geq 1$,
\begin{align*}
\mathbb{P}(N_{T}=k)
&=
e^{-\int_{0}^{T}\mu(T-s)ds}\sum\frac{1}{m_{1}!1!^{m_{1}}m_{2}!2!^{m_{2}}\cdots m_{k}!k!^{m_{k}}}
\\
&\qquad\qquad\qquad
\cdot\prod_{j=1}^{k}\left(\int_{0}^{T}\mu(T-s)F_{N,j}(s)ds\right)^{m_{j}},
\end{align*}
where the summation is over all $k$-tuples of nonnegative integers $(m_1, \ldots, m_k)$ satisfying the constraint $1\cdot m_{1}+2\cdot m_{2}+3\cdot m_{3}+\cdots+k\cdot m_{k}=k$,
and $F_{N,0}(t)=0$,
\begin{equation*}
F_{N,1}(t)=\mathbb{E}\left[e^{-\int_{0}^{t}\ell_{1}h(s)ds}\right],
\end{equation*}
and for every $j\geq 2$,
\begin{align*}
F_{N,j}(t)
&=\sum\frac{j!}{m_{1}!1!^{m_{1}}m_{2}!2!^{m_{2}}\cdots m_{j-1}!(j-1)!^{m_{j-1}}}
\\
&\qquad\qquad
\cdot
\mathbb{E}\left[e^{-\int_{0}^{T}\ell_{1}h(s)ds}\prod_{i=1}^{j-1}\left(\int_{0}^{t}\ell_{1}h(s)F_{N,i}(t-s)ds\right)^{m_{i}}\right],
\end{align*}
where the summation is over all $(j-1)$-tuples of nonnegative integers $(m_1, \ldots, m_{j-1})$ satisfying the constraint $1\cdot m_{1}+2\cdot m_{2}+3\cdot m_{3}+\cdots+(j-1)\cdot m_{j-1}=j-1$.
\end{proposition}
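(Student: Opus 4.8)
The plan is to read off the probability mass function from the generating function identity of Corollary~\ref{SingleTransform}(i) by applying Fa\`{a} di Bruno's formula twice. Put $z=e^{-\theta}$, so that for $z\in[0,1]$ (i.e. $\theta\ge 0$) Corollary~\ref{SingleTransform}(i) reads
\begin{equation*}
\sum_{k\ge 0}\mathbb{P}(N_{T}=k)\,z^{k}=\mathbb{E}[z^{N_{T}}]=e^{\Psi(z)},\qquad \Psi(z):=\int_{0}^{T}\mu(T-s)\bigl(F_{N}(s;z)-1\bigr)\,ds,
\end{equation*}
where $F_{N}(t;z)$ is the unique solution with $|F_{N}(t;z)|\le 1$ of
\begin{equation}\label{eq:planFN}
F_{N}(t;z)=z\,\mathbb{E}\Bigl[e^{\int_{0}^{t}\ell_{1}h(s)(F_{N}(t-s;z)-1)\,ds}\Bigr].
\end{equation}
Since the left-hand side is a genuine power series with radius of convergence at least $1$, one has $\mathbb{P}(N_{T}=k)=\frac{1}{k!}\,\partial_{z}^{k}e^{\Psi(z)}\big|_{z=0}$, so the whole problem reduces to computing the Taylor coefficients of $e^{\Psi(z)}$ at $z=0$, which in turn requires the Taylor coefficients of $z\mapsto F_{N}(\cdot;z)$ at $z=0$.

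First I would record that, for each fixed $t\in[0,T]$, the map $z\mapsto F_{N}(t;z)$ is analytic on $\{|z|<1\}$: this is immediate from the representation $F_{N}(t;z)=\mathbb{E}[z^{S_{t}}]$ obtained in the proof of Theorem~\ref{MainThm} (with $S_{t}\ge 1$), which exhibits $F_{N}(t;\cdot)$ as a power series in $z$ with sub-probability coefficients, and these expansions are locally uniform in $t$, allowing one to differentiate $\Psi$ under the $ds$-integral. Set $F_{N,j}(t):=\partial_{z}^{j}F_{N}(t;z)\big|_{z=0}$; because of the explicit prefactor $z$ in \eqref{eq:planFN}, $F_{N,0}\equiv 0$. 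Writing $F_{N}=z\Phi$ with $\Phi(t;z):=\mathbb{E}[e^{g(t,z)}]$ and $g(t,z):=\int_{0}^{t}\ell_{1}h(s)(F_{N}(t-s;z)-1)\,ds$, the Leibniz rule gives $F_{N,j}(t)=j\,\partial_{z}^{\,j-1}\Phi(t;z)\big|_{z=0}$. Since $g(t,0)=-\int_{0}^{t}\ell_{1}h(s)\,ds$ and $\partial_{z}^{m}g(t,0)=\int_{0}^{t}\ell_{1}h(s)F_{N,m}(t-s)\,ds$ for $m\ge 1$, applying Fa\`{a} di Bruno's formula in complete Bell polynomial form to $\partial_{z}^{\,j-1}(e^{g})$, moving $\partial_{z}$ inside $\mathbb{E}[\cdot]$, and using $j\cdot(j-1)!=j!$ yields exactly the stated recursion for $F_{N,j}$, with base case $F_{N,1}(t)=\Phi(t;0)=\mathbb{E}[e^{-\int_{0}^{t}\ell_{1}h(s)\,ds}]$.

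With the $F_{N,j}$ in hand, it remains to expand $e^{\Psi(z)}$ at $z=0$. One has $\Psi(0)=-\int_{0}^{T}\mu(T-s)\,ds$ and $\Psi^{(j)}(0)=\int_{0}^{T}\mu(T-s)F_{N,j}(s)\,ds$ for $j\ge 1$, so a final application of Fa\`{a} di Bruno's formula gives
\begin{equation*}
\partial_{z}^{k}e^{\Psi(z)}\big|_{z=0}=e^{\Psi(0)}\sum\frac{k!}{m_{1}!1!^{m_{1}}\cdots m_{k}!k!^{m_{k}}}\prod_{j=1}^{k}\bigl(\Psi^{(j)}(0)\bigr)^{m_{j}},
\end{equation*}
the sum ranging over $(m_{1},\dots,m_{k})$ with $\sum_{j}j\,m_{j}=k$; dividing by $k!$ produces the claimed formula for $\mathbb{P}(N_{T}=k)$, while $k=0$ gives $\mathbb{P}(N_{T}=0)=e^{\Psi(0)}$.

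The main obstacle is not the combinatorics but the analytic bookkeeping that legitimizes the term-by-term differentiation: one must show that $z\mapsto F_{N}(t;z)$ is smooth near $z=0$ \emph{uniformly in} $t\in[0,T]$, and that $\partial_{z}$ may be exchanged with both $\mathbb{E}[\cdot]$ and $\int_{0}^{t}(\cdot)\,ds$ in \eqref{eq:planFN}. I would handle this via the contraction structure already used in the proof of Theorem~\ref{MainThm}: the solution operator in \eqref{eq:planFN} is a $z$-dependent contraction on $C([0,T])$ depending analytically on $z$, so the analytic implicit function theorem in Banach space gives analyticity of $z\mapsto F_{N}(\cdot;z)$ together with Cauchy estimates on the derivatives; the dominated-convergence bounds needed to bring $\partial_{z}$ inside $\mathbb{E}[\cdot]$ follow from $|F_{N}|\le 1$, the sign condition $\int_{0}^{t}\ell_{1}h(s)(\mathcal{R}F_{N}(t-s;z)-1)\,ds\le 0$, and $\sup_{x\ge 0}x^{m}e^{-\varepsilon x}<\infty$, which also shows each $F_{N,j}$ is finite with no moment assumption on $\ell_{1}$ (in the degenerate regime $\int_{0}^{t}h=0$ all the higher $F_{N,j}(t)$ simply vanish).
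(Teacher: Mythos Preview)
Your proposal is correct and follows essentially the same route as the paper's proof: substitute $z=e^{-\theta}$ in Corollary~\ref{SingleTransform}(i), define $F_{N,j}(t)=\partial_z^j F_N(t;z)\big|_{z=0}$, use Leibniz's rule on $F_N=z\cdot\Phi$ to reduce to $\partial_z^{j-1}\Phi$, and apply Fa\`a di Bruno's formula twice (once for the recursion on $F_{N,j}$, once for $e^{\Psi}$). The only difference is that you are more explicit than the paper about the analytic justification (analyticity of $z\mapsto F_N(\cdot;z)$ and legitimacy of interchanging $\partial_z$ with $\mathbb{E}[\cdot]$ and $\int$), which the paper simply takes for granted.
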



Next, let us discuss the distribution of $L_{T}$. First note that by assuming that $\mathbb{P}(\ell_{1}=0)=0$,
we always have $\mathbb{P}(L_{T}=0)=\mathbb{P}(N_{T}=0)=e^{-\int_{0}^{T}\mu(s)ds}$ for any nonnegative jump size distribution of $\ell_1$.  Next, we assume that the random jump size $\ell_{1}$ has a lattice distribution and
takes discrete values $k\delta$, $k\in\mathbb{N}$
with $\mathbb{P}(\ell_{1}=k\delta)=p_{k}$
where $0\leq p_{k}\leq 1$ and $\sum_{k=1}^{\infty}p_{k}=1$ for some $\delta>0.$
Note that this includes the case for geometrically distributed
$\ell_{1}$, Poisson distributed $\ell_{1}$ etc. for fixed $\delta=1$.
Under this assumption, $L_{T}$ also takes values
$k\delta$, for $k\in\mathbb{N}\cup\{0\}$.

Then we have the following result on the distribution of $L_{T}$ when the jump size $\{\ell_i\}$ is lattice distributed. The proof is deferred to the Appendix.
\begin{proposition} \label{prop:LT-distri}
We have $\mathbb{P}(L_{T}=0)=e^{-\int_{0}^{T}\mu(T-s)ds}$,
and for any $k\geq 1$,
\begin{align*}
\mathbb{P}(L_{T}=k\delta)
&=
e^{-\int_{0}^{T}\mu(T-s)ds}\sum\frac{1}{m_{1}!1!^{m_{1}}m_{2}!2!^{m_{2}}\cdots m_{k}!k!^{m_{k}}}
\\
&\qquad\qquad\qquad
\cdot\prod_{j=1}^{k}\left(\int_{0}^{T}\mu(T-s)F_{L,j}(s)ds\right)^{m_{j}},
\end{align*}
where the summation is over all $k$-tuples of nonnegative integers $(m_1, \ldots, m_k)$ satisfying the constraint $1\cdot m_{1}+2\cdot m_{2}+3\cdot m_{3}+\cdots+k\cdot m_{k}=k$,
and $F_{L,0}(t)=0$,
\begin{equation*}
F_{L,1}(t)=p_{1}e^{-\int_{0}^{t}\delta h(s)ds},
\end{equation*}
and for every $j\geq 2$,
\begin{align*}
F_{L,j}(t)&=\sum_{k=0}^{j}\binom{j}{k}k!
e^{-\int_{0}^{t}k\delta h(s)ds}p_{k}
\sum\frac{(j-k)!}{m_{1}!1!^{m_{1}}m_{2}!2|^{m_{2}}\cdots
m_{j-k}!(j-k)!^{m_{j-k}}}
\\
&\qquad\qquad\qquad\cdot
\prod_{i=1}^{j-k}\left(\int_{0}^{t}k\delta h(s)F_{L,i}(t-s)ds\right)^{m_{i}},
\end{align*}
where the summation is over all $(j-k)$-tuples of nonnegative integers $(m_1, \ldots, m_{j-k})$ satisfying the constraint $1\cdot m_{1}+2\cdot m_{2}+\cdots+(j-k)\cdot m_{j-k}=j-k$.
\end{proposition}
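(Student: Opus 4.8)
The plan is to recover $\mathbb{P}(L_{T}=k\delta)$ as a Taylor coefficient of the Laplace transform \eqref{eq:L-lap}, mirroring the proof of Proposition~\ref{prop:mass}. Since $\mathbb{P}(\ell_{1}=0)=0$ and $\ell_{1}$ is supported on $\{k\delta:k\ge 1\}$, the random variable $L_{T}/\delta$ is $\mathbb{N}\cup\{0\}$-valued. Putting $z=e^{-\theta\delta}$, so that $e^{-\theta L_{T}}=z^{L_{T}/\delta}$ and $e^{-\theta\ell_{1}}=z^{\ell_{1}/\delta}$, Corollary~\ref{SingleTransform}(ii) becomes
\[
\sum_{k\ge 0}\mathbb{P}(L_{T}=k\delta)\,z^{k}=\mathbb{E}\bigl[z^{L_{T}/\delta}\bigr]=\exp\!\left(\int_{0}^{T}\mu(T-s)\bigl(F_{L}(s;z)-1\bigr)\,ds\right),
\]
where $F_{L}(t;z):=\mathbb{E}\bigl[z^{\ell_{1}/\delta}\exp(\int_{0}^{t}\ell_{1}h(s)(F_{L}(t-s;z)-1)\,ds)\bigr]$ is \eqref{FL} with the $z$-dependence made explicit. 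The left-hand side is the probability generating function of $L_{T}/\delta$, hence analytic on $\{|z|<1\}$, and $\mathbb{P}(L_{T}=k\delta)=\frac1{k!}\,\partial_{z}^{k}\big|_{z=0}$ of the right-hand side; the whole task is to evaluate these derivatives.

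First I would establish that, for $|z|<1$, the fixed-point equation for $F_{L}(\cdot\,;z)$ has a unique solution in $C([0,T];\mathbb{C})$ and that $z\mapsto F_{L}(t;z)$ is analytic near $0$, uniformly in $t\in[0,T]$: this is the same Gronwall/contraction estimate as in the uniqueness part of Theorem~\ref{MainThm} (using $\Vert h\Vert_{L^{\infty}[0,T]}$, $\mathbb{E}[\ell_{1}]$ and $|z^{\ell_{1}/\delta}|\le|z|$), together with the observation that the Picard iterates are analytic in $z$ and converge uniformly, so the limit is analytic and the displayed identity extends by analytic continuation to a neighbourhood of $0$. Evaluating at $z=0$ gives $F_{L}(t;0)=\mathbb{E}\bigl[0^{\ell_{1}/\delta}e^{\cdots}\bigr]=0$ because $\ell_{1}/\delta\ge 1$; thus $F_{L,0}(t)=0$, the right-hand side reduces to $e^{-\int_{0}^{T}\mu(T-s)\,ds}$, and $\mathbb{P}(L_{T}=0)=e^{-\int_{0}^{T}\mu(T-s)\,ds}$.

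For $k\ge 1$, set $g(z):=\int_{0}^{T}\mu(T-s)(F_{L}(s;z)-1)\,ds$, so the transform equals $e^{g(z)}$. Interchanging $\partial_{z}$ with the $s$-integral gives $g^{(j)}(0)=\int_{0}^{T}\mu(T-s)F_{L,j}(s)\,ds$ with $F_{L,j}(t):=\partial_{z}^{j}F_{L}(t;z)\big|_{z=0}$, and Fa\`{a} di Bruno's formula applied to $e^{g}$ yields
\[
\frac1{k!}\,\partial_{z}^{k}e^{g(z)}\big|_{z=0}=e^{-\int_{0}^{T}\mu(T-s)\,ds}\sum\frac{1}{m_{1}!1!^{m_{1}}\cdots m_{k}!k!^{m_{k}}}\prod_{j=1}^{k}\Bigl(\int_{0}^{T}\mu(T-s)F_{L,j}(s)\,ds\Bigr)^{m_{j}},
\]
the sum over $m_{1}+2m_{2}+\cdots+km_{k}=k$, which is exactly the claimed expression for $\mathbb{P}(L_{T}=k\delta)$. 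It then remains to derive the recursion for $F_{L,j}$. Differentiating $F_{L}(t;z)=\mathbb{E}\bigl[z^{\ell_{1}/\delta}e^{G(t;z)}\bigr]$, with $G(t;z):=\int_{0}^{t}\ell_{1}h(s)(F_{L}(t-s;z)-1)\,ds$, $j$ times in $z$, passing $\partial_{z}$ through $\mathbb{E}$ (legitimate for $|z|<1$ by dominated convergence, since $|z^{\ell_{1}/\delta}|\,\ell_{1}^{m}$ is bounded uniformly in $\ell_{1}$ and the $z$-derivatives of $e^{G}$ are controlled inductively via local boundedness of $h$, so no moment conditions on $\ell_{1}$ are needed), and using the Leibniz rule, one gets $\sum_{k=0}^{j}\binom{j}{k}(\partial_{z}^{k}z^{\ell_{1}/\delta})(\partial_{z}^{j-k}e^{G})$ at $z=0$. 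Now $\partial_{z}^{k}z^{\ell_{1}/\delta}\big|_{z=0}=k!\,\mathbf{1}_{\{\ell_{1}=k\delta\}}$ (the $k=0$ summand vanishing since $\mathbb{P}(\ell_{1}=0)=0$), so conditioning on $\ell_{1}=k\delta$ (probability $p_{k}$) contributes the factor $k!\,p_{k}$ and replaces $\ell_{1}$ by $k\delta$ throughout; on that event $G(t;0)=-\int_{0}^{t}k\delta h(s)\,ds$ and $\partial_{z}^{i}G(t;z)\big|_{z=0}=\int_{0}^{t}k\delta h(s)F_{L,i}(t-s)\,ds$, so a second application of Fa\`{a} di Bruno to $\partial_{z}^{j-k}e^{G}\big|_{z=0}$ produces precisely the stated double sum for $F_{L,j}$, and a direct check at $j=1$ recovers $F_{L,1}(t)=p_{1}e^{-\int_{0}^{t}\delta h(s)\,ds}$.

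The main obstacle is the first step: showing that $F_{L}(t;z)$ is genuinely analytic (or at least $C^{\infty}$) in $z$ at the origin uniformly in $t\in[0,T]$, so that the Taylor coefficients read off the right-hand side are the true probabilities and so that differentiation may be exchanged with both the expectation over $\ell_{1}$ and the time integral. Once that is secured, the two nested applications of Fa\`{a} di Bruno are routine combinatorial bookkeeping, identical in structure to the $N_{T}$ case in Proposition~\ref{prop:mass}.
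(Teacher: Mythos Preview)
Your proposal is correct and follows essentially the same route as the paper: substitute $z=e^{-\theta\delta}$ to turn the Laplace transform into the probability generating function of $L_T/\delta$, apply Fa\`a di Bruno to $e^{g(z)}$ to extract $\mathbb{P}(L_T=k\delta)$, and then combine the Leibniz rule with a second Fa\`a di Bruno to obtain the recursion for $F_{L,j}$. The only notable difference is that you spend effort justifying analyticity of $F_L(\cdot;z)$ and the interchange of differentiation with expectation and time integration, whereas the paper simply writes $F_L(t)=\sum_{k\ge1}z^{k}p_{k}e^{\int_0^t k\delta h(s)(F_L(t-s)-1)ds}$ and differentiates formally without addressing these points.
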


Numerical methods to calculate
the summands in Fa\`{a} di Bruno's formula in Propositions~\ref{prop:mass} and \ref{prop:LT-distri} can be found in, e.g. \citet{Kilmko73}. In general, when the jump size $\ell_{1}$ is not lattice distributed, one can still efficiently calculate the distributions of $N_T$ and $L_T$ by numerically inverting the Laplace transforms in Corollary~\ref{SingleTransform}. See e.g. \citet{AbateWhitt95} for numerical Laplace transform inversion methods.

\section{Applications in Dark Pool Trading} \label{sec:application}

In this section, we apply the main results to analyze performance problems arising from trading in dark pools. We use the Hawkes process to model executions of a large midpoint peg order placed at an empty dark pool and compute various performances in Section~\ref{sec:performance}.  Non-empty dark pools are discussed in Sections~\ref{sec:nonempty}. In computing some performance metrics (e.g. the probability of obtaining another fill conditioned on a past fill) and studying non-empty pools, we will see that it is natural to study a Hawkes process with a time-dependent baseline intensity.

\subsection{Model description and performance analysis}\label{sec:performance}

Suppose an investor rests a large midpoint buy order of size $x>0$ in a dark pool with a continuous first-come-first-served matching mechanism. This order is ``pegged'' at the mid-quote of transparent exchanges, i.e., the execution price of the order automatically adjusts as the market moves. Considering a sell order is similar.
As liquidity in dark pools could be sparse and there could be a high probability of no volume in pools (see, e.g., \citet{Ganchev2010, Markov2013}), we assume in this section that when the investor's order reaches the dark pool there are no other orders sitting in the pool.

We model the successive executions of this midpoint peg order using a Hawkes process. More specifically,
we model the consolidated sell trades from other players in the dark pool as a Hawkes process $(N, L)$ with the intensity \eqref{eq:dynamics}, where $N_t$ counts the number of eligible-to-trade sell orders (or trades with the investor's resting buy order) by time $t$ and the i.i.d. sequence $\{\ell_i: i =1, 2, \ldots\}$ models the volumes of arriving sell orders. Such a self-exciting Hawkes process based model of executions of a large order could capture the clustering of trade arrivals and positive liquidity feedback in dark pools.

Since the pool is assumed to be initially empty, there will be no trade occurring at time zero when the investor puts the buy order in the pool. 
A sample path of the trading intensity $\lambda_t$ and the remaining quantity of the dark order is given in Figure~\ref{fig1}.

\begin{figure}[htb]
	\centering
	\subfigure[Trading intensity $\lambda_t$]{
		\begin{minipage}[b]{0.45\textwidth}
			\centering
			\includegraphics[width=\textwidth]{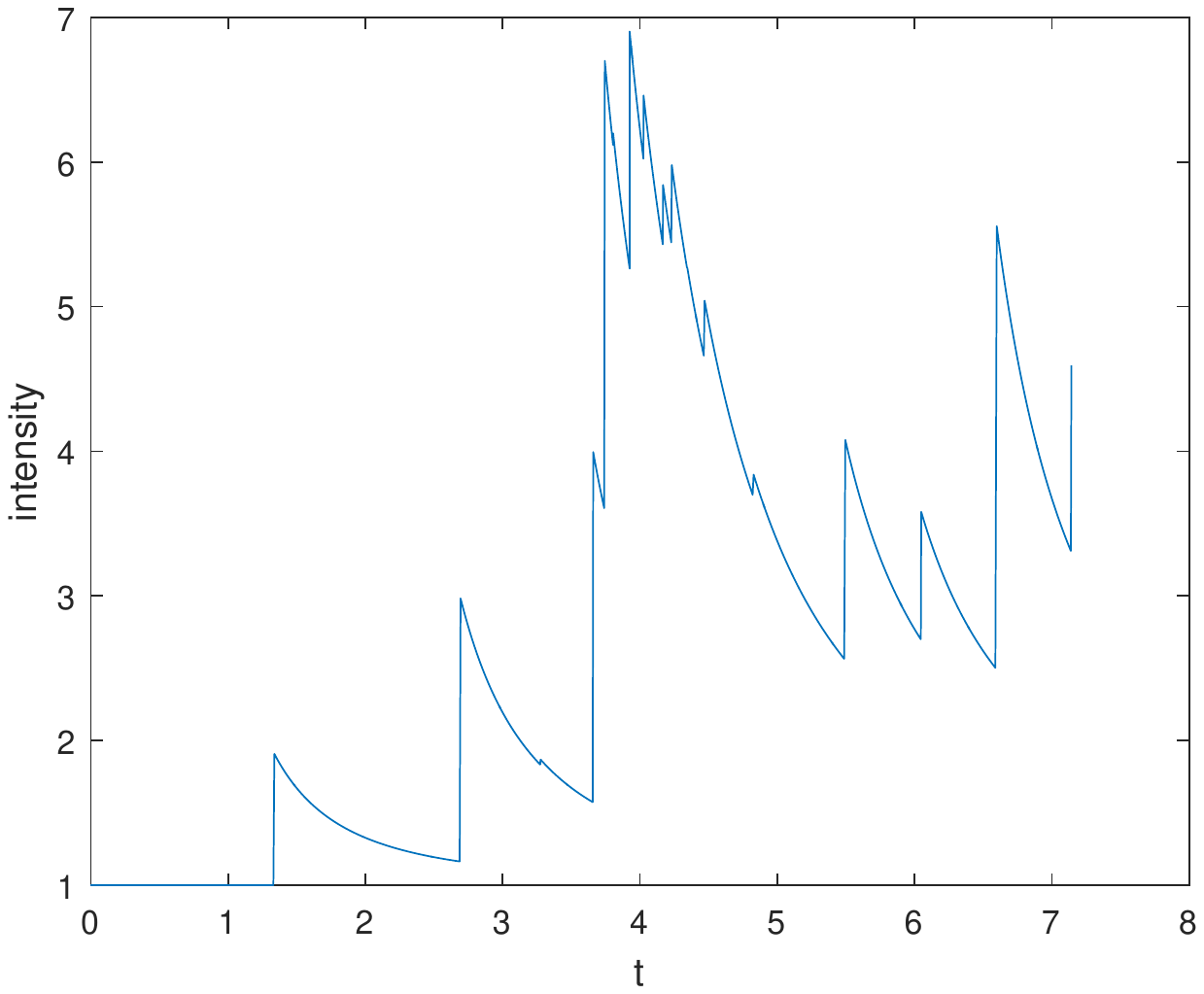}
	\end{minipage}	}			
	\subfigure[Remaining order size]{
		\label{fig:mini:subfig:1}
		\begin{minipage}[b]{0.45\textwidth}
			\centering
			\includegraphics[width=\textwidth]{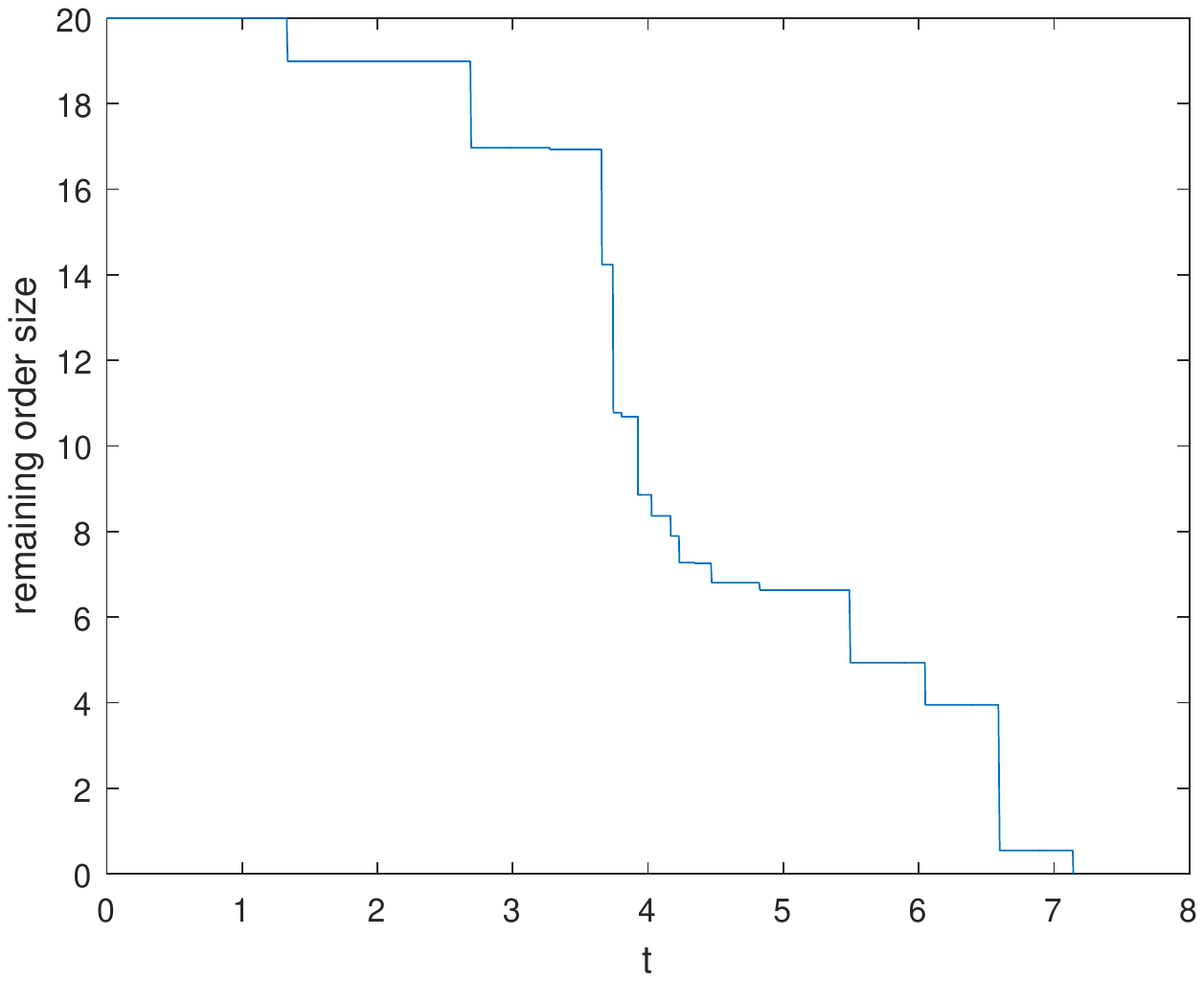}
	\end{minipage}}	
	\caption{\emph{(a) A sample path of the intensity $\lambda_t$ of a Hawkes process and (b) a sample path of the remaining order size $(x-L_t)^+$ for a resting dark order with initial size $x=20$. Here for the Hawkes model \eqref{eq:dynamics}, the baseline intensity $\mu(t)\equiv 1$, the trade size $\ell_i$ follows an exponential distribution with mean one, and the exciting function $h(t)= \frac{9}{10} \frac{1}{(1+t)^2}$. Note that this Hawkes model is non-Markovian.}}
	\label{fig1}
\end{figure}
For this particular path, we observe from Figure~\ref{fig:mini:subfig:1} that the resting order matches with incoming sell trades with variable sizes, and it will be completely filled at time $t=7.2$ if the investor leaves the order in the pool
for a sufficiently long time. On the other hand, if the investor decides to cancel the order before time $t=7.2$, then this resting order will be partially executed and the remaining quantity could be routed to another dark pool or a lit exchange for liquidity-seeking purposes.

We want to compute various performance quantities of interest such as time-to-first-fill and fill rate which indicate the liquidity of a dark venue. Their mathematical expressions and economic interpretations are summarized below. In a fragmented financial market with multiple dark pools and exchanges, these performance metrics could be useful for smart order routing and allocation of liquidity among different pools to maximize fills and liquidity opportunities from dark pools, which in turn help investors reduce market impact or opportunity cost in trading big orders.
In terms of the notations, we differentiate between $\ell_{i}$ and $l_{i}$ by
having $\ell_{i}$ being random and $l_{i}$ being deterministic and given.

\textbf{Performance quantities we consider}:
 \begin{enumerate}
\item [(a)] Time-to-first-fill $\tau(1)$ of the order is defined by
\begin{equation*}
\tau(1):=\inf\{t \ge 0: N_t =1\}.
\end{equation*}
That is, $\tau(1)$ measures the time between order placement at a given dark venue and the first execution (possibly a partial fill) of that order. Thus we obtain that the probability of a fill within $[0,t]$ is given by
\begin{equation*}
\mathbb{P}(\tau({1}) \le t)=1 - \mathbb{P}(N_t=0)=1 - e^{-\int_{0}^{t}\mu(s)ds}, \quad \text{for $t \ge 0$}.
\end{equation*}
>From this expression, it is clear that the baseline intensity $\mu(\cdot)$ completely determines the distribution of time-to-first-fill.
In particular, when $\mu(\cdot)\equiv\mu$ is constant, $\tau(1)$ is an exponential random variable with mean $1/\mu$.

 \item [(b)] Time-to-complete-fill $\sigma_x$ of a resting order with size $x>0$ is defined by the time it takes for the order to be completely executed. That is, $\sigma_x$ measures the time it takes
 for the aggregated volumes of matching trades exceed the resting order's size $x$:
 \begin{equation} \label{eq:TTF}
 \sigma_x:=\inf\{t \ge 0: L_t\ge x \}.
 \end{equation}
 Hence its distribution is given by
 \begin{equation*}
\mathbb{P}(\sigma_{x}\leq t)=\mathbb{P}(L_t\geq x), \quad \text{for $t \ge 0$}.
\end{equation*}
Since we have obtained the Laplace transform of $L_t$ in \eqref{eq:L-lap}, we can then use the inverse Laplace transform to obtain the distribution of $L_t$ and that of $\sigma_x$ numerically. In addition, we can also compute the expected time-to-complete-fill $\mathbb{E}[\sigma_x]$ numerically, where
\begin{equation} \label{eq:E-TTF}
\mathbb{E}[\sigma_x] = \int_{0}^{\infty} \mathbb{P}(\sigma_{x}> t) dt = \int_{0}^{\infty} \mathbb{P}(L_t < x) dt.
\end{equation}

 \item [(c)] The expected fill rate of the resting dark order with size $x>0$ in the time interval $[0, t]$ is defined by
 \begin{equation}\label{eq:fillrate}
 \frac{1}{x} \cdot \mathbb{E} [\min\{L_{t},x\}],
 \end{equation}
 which is equal to $\frac{1}{x}\int_{0}^x \mathbb{P}(L_t >y)dy$.
 In practice, the deadline $t$ can be deterministic or random. For example, the investor may rest the order in a particular dark pool for one minute which is predetermined at the time of order placement. It is also possible that the investor may cancel a resting order due to exogenous market events such as a significant price move, in which case $t$ is random. We can numerically evaluate the expected fill rate \eqref{eq:fillrate} efficiently if $t$ is independent of the execution process $(N, L)$ by first inverting the Laplace transform of $L_t$ in \eqref{eq:L-lap} and getting its distribution, and then calculate the expectation in \eqref{eq:fillrate}. Alternatively, one can also use Fast Fourier Transform (FFT) methods where the expected fill rates \eqref{eq:fillrate} across the whole spectrum of order sizes $x$ can be obtained in one set of FFT calculations. See e.g. \cite{CarrMadan}.

 \item [(d)] The probability of obtaining one fill (or at least one fill) in the next $T$ units of time, given that there is an initial fill of size $l_1<x$ in $(0,t]$. Mathematically we are interested in computing
\begin{equation}\label{eq:cond-hit}
\mathbb{P}(N_{t+T}-N_t=1|N_t=1,\ell_{1}=l_{1}),
\end{equation}
and
\begin{equation}\label{eq:cond-hit2}
\mathbb{P}(N_{t+T}-N_t\ge 1|N_t=1,\ell_{1}=l_{1}) = 1 - \mathbb{P}(N_{t+T}-N_t=0|N_t=1,\ell_{1}=l_{1}).
\end{equation}
As argued in the industry paper \citet{Mittal2007}, these conditional fill probabilities are particularly interesting in practice. Liquidity in a dark pool is sticky, and the expectation of liquidity changes when a trade occurs. The conditional fill probabilities in \eqref{eq:cond-hit} and \eqref{eq:cond-hit2} give investors a quantitative view of the liquidity expectation in the future given a prior fill of the resting order.

To compute these conditional fill probabilities, we can use the Laplace transform of $N_T$ in Corollary~\ref{SingleTransform} and the intensity dynamics \eqref{eq:dynamics} to obtain that
\begin{align} \label{eq:cond-prob}
&\mathbb{P}(N_{t+T} - N_t=0|N_t=1,\ell_{1}=l_{1}) \nonumber
\\
&=\frac{\int_{0}^{t}\mu(\tau_{1})e^{-\int_{0}^{\tau_{1}}\mu(s)ds}e^{-\int_{\tau_{1}}^{t}(\mu(s)+h(s-\tau_{1})l_{1})ds}
e^{-\int_{t}^{t+T}(\mu(s)+h(s-\tau_{1})l_{1})ds}d\tau_{1}}
{\int_{0}^{t}\mu(\tau_{1})e^{-\int_{0}^{\tau_{1}}\mu(s)ds}e^{-\int_{\tau_{1}}^{t}(\mu(s)+h(s-\tau_{1})l_{1})ds}d\tau_{1}},
\end{align}
and
\begin{align} \label{eq:cond-prob2}
&\mathbb{P}( N_{t+T} - N_t=1|N_t=1,\ell_{1}=l_{1}) \nonumber
\\
&=\left(\int_{0}^{t}\mu(\tau_{1})e^{-\int_{0}^{\tau_{1}}\mu(s)ds}e^{-\int_{\tau_{1}}^{t}(\mu(s)+h(s-\tau_{1})l_{1})ds}d\tau_{1}\right)^{-1}
\nonumber \\
&\quad \cdot\int_{0}^{t}\bigg[\mu(\tau_{1})e^{-\int_{0}^{\tau_{1}}\mu(s)ds}e^{-\int_{\tau_{1}}^{t+T}(\mu(s)+h(s-\tau_{1})l_{1})ds}
\nonumber
\\
&\qquad\qquad\qquad
\cdot
\int_{t}^{t+T}(\mu(s)+h(s-\tau_{1})l_{1})\mathbb{E} \left[e^{-\int_{0}^{T+t-s} \ell_1 h(u)du}\right]ds\bigg]d\tau_{1}.
\end{align}

A detailed derivation of \eqref{eq:cond-prob} and \eqref{eq:cond-prob2} relies on the distributional properties of a Hawkes process with a time-dependent baseline intensity (due to the conditioning), and it
is given in the Appendix.

Two interesting observations are in order.
First, we can infer from \eqref{eq:cond-hit2} and \eqref{eq:cond-prob} that the conditional probability of at least one fill given there is a past fill of size $l_1$ in the last $t$ units of time, is independent of the distribution of the trade size $\ell_1.$ Second, we note from \eqref{eq:cond-prob2} that the conditional probability of exactly one fill in the next $T$ units of time depends on the distribution of $\ell_1$ only through its Laplace transform.

\item [(e)] The expected fill size of the resting dark order in the next $T$ units of time conditioned on there is an initial fill of size $l_1<x$ in $(0,t]$, is given by
\begin{equation}\label{eq:conditional-expec}
\mathbb{E}\left[(L_{t+T} - L_t) \wedge (x-l_1) |N_t = 1, \ell_1 =l_1\right]
=\mathbb{E}\left[\min\{L_{t+T} , x\} |N_t = 1, \ell_1 =l_1\right] - l_1.
\end{equation}
Similar as the conditional fill probabilities, such a conditional expectation provides the investor with an indication of the liquidity size in the dark pool based on a prior execution of the dark order.

To compute this conditional expected fill size, we can first infer from Corollary~\ref{SingleTransform} and the intensity dynamics \eqref{eq:dynamics} to obtain the following Laplace transform:
\begin{align}
&\mathbb{E}\left[e^{-\theta (L_{t+T} - L_t) }|N_t=1,\ell_{1}=l_{1}\right] \nonumber
\\
&=\frac{\int_{0}^{t}\mu(\tau_{1})e^{-\int_{0}^{\tau_{1}}\mu(s)ds}e^{-\int_{\tau_{1}}^{t}(\mu(s)+h(s-\tau_{1})l_{1})ds}
e^{\int_{t}^{t+T}(\mu(s)+h(s-\tau_{1})l_{1})(F_{L}(T+t-s)-1)ds}d\tau_{1}}
{\int_{0}^{t}\mu(\tau_{1})e^{-\int_{0}^{\tau_{1}}\mu(s)ds}e^{-\int_{\tau_{1}}^{t}(\mu(s)+h(s-\tau_{1})l_{1})ds}d\tau_{1}},
\label{pgfL}
\end{align}
where for any $0\leq t\leq T$, the function $F_L$ satisfies the integral equation:
\begin{equation*}
F_{L}(t)=\mathbb{E}\left[e^{-\theta\ell_{1}+\int_{0}^{t}\ell_{1}h(s)(F_{L}(t-s)-1)ds}\right].
\end{equation*}
Then, we can numerical invert this Laplace transform to obtain the conditional distribution of $L_{t+T} - L_t$ and hence the conditional expectation in \eqref{eq:conditional-expec}.
The derivation of \eqref{pgfL} is similar as the derivation of \eqref{pgfN} in the Appendix, where we use the properties of a Hawkes process with a time-dependent baseline intensity. We omit the derivation here.
\end{enumerate}

Two remarks are in order. First, the estimations of the performance metrics of a resting order are relatively straightforward if the investor has his/her own execution data from trading in dark pools. For example, the expected fill rate of an order of size $x$ placed at a dark pool for a given time horizon can be estimated as the arithmetic average of the fill rate of many orders with the same sizes $x$ placed at this pool, assuming that the market conditions and pool characteristics remain stationary. The estimation procedure is similar for other performance quantities. Second, given the investor has the execution data from trading in a dark pool, it is also possible to estimate the Hawkes model by first estimating the trade size distribution $\ell_i$, and then estimate the baseline intensity and the exciting function using parametric or non--parametric methods. See e.g. \cite{Bacry2015, Errais} and references therein for details on estimations of Hawkes models.


\subsubsection*{Numerical examples}
We now present numerical examples to illustrate the computations of the various quantities derived in Section~\ref{sec:performance}. We also consider different order size distributions for $\ell_i$, different exciting functions $h(\cdot)$, and different baseline intensities $\mu(\cdot)$
to investigate their impact on the performance metrics. Our numerical experiments are implemented in MATLAB on a PC with a 3.30 GHz Intel Processor and 8 GB of RAM.

To compute the performance quantities, we need to get the distribution of $L_t$ numerically. This requires us first to solve the integral equation \eqref{FL} to obtain the point process transform $F_L(t) $ for a given $\theta \in\mathbb{C}$ with a nonnegative real part, and then use Laplace inversion methods to obtain the distribution of $L_t$ for fixed $t$.  The Laplace inversion method we use is a Fourier
series method which employs Bromwich contour inversion integral and Euler summation. See \citet{AbateWhitt95} for a detailed
description of this Laplace inversion algorithm (called EULER in the paper).

To numerically solve $F_L(t)$ which satisfies a Hammerstein--type Volterra integral equation as in \eqref{FL}, we apply the collocation method, see e.g. Chapter~2.3.3 in \citet{Brunner2004}.
The main idea of this method is to select a number of points (collocation points) on $[0, t]$, and use piecewise polynomial functions to approximate the true solution where the piecewise polynomial functions solve the given integral equation at the collocation points. Table~\ref{table:time} reports the computation time for representative examples where we solve $F_L(t)$ for $ t \in [0,6]$ using piecewise linear approximation on $[0, 6]$ with a uniform mesh consisting of 150 subintervals, a number which balances the speed and accuracy of the algorithm. The computation time is generally around 25 seconds for various different specifications of the mark size $\ell_i$ and the exciting function $h(\cdot)$.

\begin{table}[h]
		\centering
		\begin{tabular}{cccc}
			\hline	
			& $\ell_i \equiv 1$ & $\ell_i \sim$ exponential & $\ell_i \sim$ hyper exponential  \\
			\hline
			$h_1(t)$ &25.411 &24.795  &25.343 \\
			
			$h_2(t)$ &25.349 &25.048   &25.472\\
			
			$h_3(t)$ & 25.433&  25.778 &25.801\\
			\hline
		\end{tabular}
		\caption{For a given $\theta \in\mathbb{C}$ ($\theta =1+ i$ in this example), this table records the CPU time (in seconds) for using piecewise linear collocation method to solve $F_L(t)$ on the time interval $[0,6]$ with a uniform mesh consisting of 150 subintervals for different combinations of the mark size $\ell_i$ and the exciting function $h(t)$. Here, we have considered three distributions for $\ell_i$: (a) $\ell_i \equiv 1$; (b) $\ell_i$ follows an exponential distribution with mean 1; and (c) $\ell_i$ has mean one and it follows a mixture of an exponential distribution with mean 5 and an exponential distribution with mean 1/5. Three exciting functions $h(t)$ considered are: (a) $h_1(t)=\frac{9}{10}(1+t)^{-2}$; (b) $h_2(t)=\frac{9}{10}(1+t)^{-3}$; and (c) $h_3(t)=\frac{9}{10}e^{-t}$.}
\end{table} \label{table:time}

We now report numerical results for the performance quantities (b)--(e) in Section~\ref{sec:performance}, since the time-to-first-fill is completely determined by the baseline intensity function. Unless otherwise stated, we fix a constant baseline intensity $\mu(t)\equiv \mu=1$.

\textbf{Varying the trade size distribution $\ell_i$ while fixing an exciting function $h(t)= \frac{9}{10} \frac{1}{(1+t)^2}$.} Without loss of generality, we consider here three different distributions for $\ell_i$, all with a unit mean:
(a) $\ell_i \equiv 1$; (b) $\ell_i$ has an exponential distribution; and (c) $\ell_i$ has a hyper-exponential distribution: here, we consider a concrete example where
$\ell_i$ follows a mixture of an exponential distribution with mean 5 and an exponential distribution with mean 1/5. This choice of trade size distributions is motivated by \citet{Afeche2014}. In particular,
a mixture of exponential distributions with different means can capture the feature that in dark pools, impatient high frequency traders submit small ``pinging'' orders and liquidity traders may submit relatively larger orders. In addition, the class of hyper-exponential distributions is very rich that it can approximate many heavy-tailed distributions for trade sizes, while maintaining analytical tractability, see e.g. \citet{Cai2009, CaiKou2011}.

Figure~\ref{fig2} summarizes the results on the expected time-to-complete-fill as a function of the resting order size $x$.
Two key observations stand out from our results in Figure \ref{fig2}.
\begin{figure}[h]
	\centering		
	\includegraphics[width=0.6\textwidth]{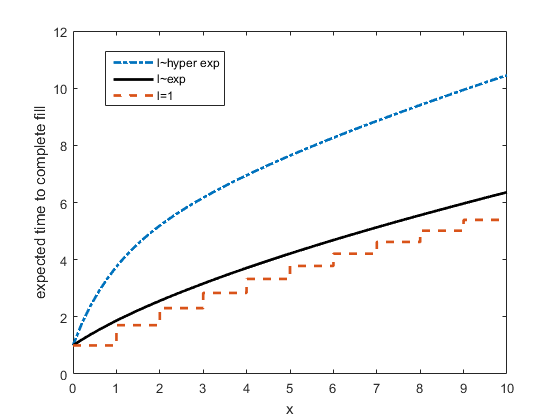}
	\caption{\emph{Expected time-to-complete-fill $\mathbb{E}[\sigma_x]$ in \eqref{eq:E-TTF}, as a function of the resting order size $x$. Here, $\mu(t) \equiv 1$ and the exciting function $h(t)= \frac{9}{10} \frac{1}{(1+t)^2}$ are fixed. The three curves correspond to three different distributions with a unit mean: (a) $\ell_i \equiv 1$ (red); (b) $\ell_i$ follows an exponential distribution (black); and (c) $\ell_i$ follows a mixture of an exponential distribution with mean 5 and an exponential distribution with mean 1/5 (blue). }}
	\label{fig2}
\end{figure}

First, the expected time-to-complete-fill of the dark order increases in the size $x$ of the dark order and changes significantly when the distribution of the incoming trade size varies. It can be seen from the figure that given the size of the investor's resting order $x$, when the incoming trade size follows a hyper-exponential distribution (a mixture of exponential distributions), it takes longer on average to completely fill this resting order than the cases of an exponential and a constant order size with the same mean. This observation is similar to the special case when $h \equiv 0$ where the point process $L$ becomes a compound Poisson process, and one can show that (see the Appendix for a proof)
\begin{equation}\label{321}
\mathbb{E}[\sigma^{(3)}_x]  > \mathbb{E}[\sigma^{(2)}_x] = x +1 > \mathbb{E}[\sigma^{(1)}_x] = \lceil x \rceil, \quad \text{for $x > 0.$}
\end{equation}
Here $\mathbb{E}[\sigma^{(i)}_x]$, $i=1,2,3$, are the expected time-to-complete-fill for
the compound Poisson arrival with trade sizes $\ell_{i}^{(1)}$ being constant, $\ell_{i}^{(2)}$ being exponential and $\ell_{i}^{(3)}$ being hyper-exponential (all with mean one) respectively.
We also remark that the expected time-to-complete-fill with Hawkes trades arrivals
depends on the distribution of $\ell_i$, not just its coefficient of variation.

Second, the expected time-to-complete-fill of the first unit of a resting dark order is greater than the second and subsequent units. This reflects the self-exciting modeling of the order execution process which captures the trade clustering behavior. In other words, after a partial execution of the resting dark order, the expectation of another trade and the future trading intensity will increase, which leads to a continuing reduction of the marginal time-to-complete-fill of the resting dark order.

Next in Figure~\ref{fig3}, we plot the expected fill rate of a resting order of size $x=10$, as a function of rest time $t$ for different trade size distributions. We observe that for a
hyper-exponential trade size distribution, the expected fill rate of the resting order is much smaller than the case of a constant order size with the same mean. This is
consistent with the observations from Figure~\ref{fig2} which suggest that it is harder to fill an order with ``more variable" trade sizes.  We provide an informal explanation on this relative order of expected fill rate for different trade size distributions in Figure~\ref{fig3}.
Let $L_{t}^{(1)}$, $L_{t}^{(2)}$, and $L_{t}^{(3)}$ denote
the associated $L_{t}$ when $\ell_{1}^{(1)}$ is a constant $1$, $\ell_{1}^{(2)}$ is exponentially distributed
and $\ell_{1}^{(3)}$ is hyper-exponentially distributed, all with the same mean.
First, notice that $\mathbb{E}[L_{t}^{(1)}]=
\mathbb{E}[L_{t}^{(2)}]=\mathbb{E}[L_{t}^{(3)}]$ from Proposition~\ref{prop:moments}
as $\mathbb{E}[\ell_{1}^{(1)}]=\mathbb{E}[\ell_{1}^{(2)}]=\mathbb{E}[\ell_{1}^{(3)}]$.
Next, observe that the hyper-exponential distribution is more ``spread out" than the exponential distribution which is more ``spread out" than a constant. Now when the jump size increases, the intensity of future arrivals
also increase, and as a result $L_{t}$ increases. Similar
argument holds when the jump size decreases.
This suggests $L^{(3)}_{t}$ with a hyper-exponential jump size is ``more variable" than $L^{(2)}_{t}$ with an exponential jump size in the sense that $L^{(3)}_{t}$ is more likely to take on ``extreme" values.
So intuitively, the expected fill rates
satisfy $\frac{1}{x}\mathbb{E}[\min\{L_{t}^{(1)},x\}]
\geq \frac{1}{x}\mathbb{E}[\min\{L_{t}^{(2)},x\}]\geq \frac{1}{x}\mathbb{E}[\min\{L_{t}^{(3)},x\}]$.

\begin{figure}[h]
	\centering		
	\includegraphics[width=0.6\textwidth]{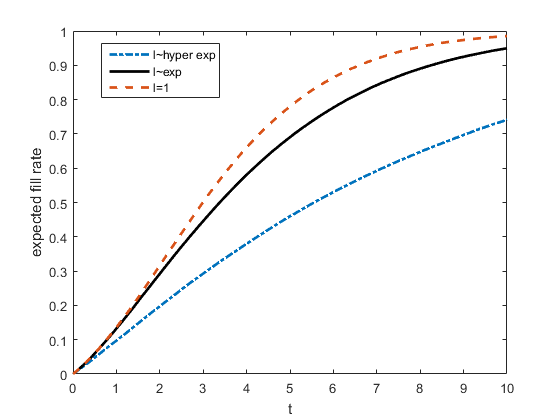}
	\caption{\emph{Expected fill rate $\mathbb{E} [\min\{L_{t},x\}]/x$ in \eqref{eq:fillrate} vs rest time $t$, for a resting order with size $x=10$. Here, $\mu(t)\equiv 1$ and the exciting function $h(t)= \frac{9}{10} \frac{1}{(1+t)^2}$ are fixed.  The three curves correspond to three different distributions with a unit mean: (a) $\ell_i \equiv 1$ (red); (b) $\ell_i$ follows an exponential distribution (black); and (c) $\ell_i$ follows a mixture of an exponential distribution with mean 5 and an exponential distribution with mean 1/5 (blue). }}
	\label{fig3}
\end{figure}

Furthermore, we plot in Figure~\ref{fig4} the conditional probability of one fill and the conditional expected fill size for the resting order as a function of the future $T$ units of time, given that there is a fill of size one in the past two units of time. Mathematically, the event conditioned on is $\{N_2=1,\ell_1=1\}$.
\begin{figure}[h]
	\centering
	\subfigure[conditional probability of one fill]{
		\begin{minipage}[b]{0.45\textwidth}
			\centering
			\includegraphics[width=\textwidth]{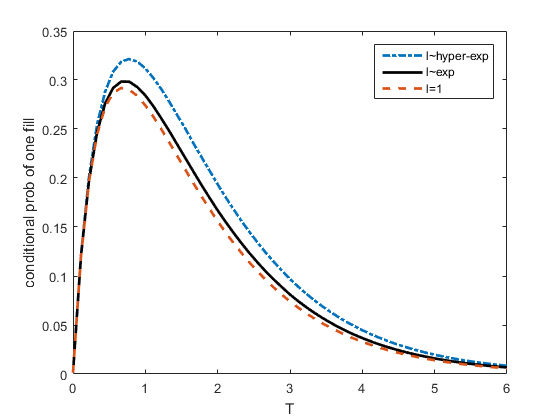}
	\end{minipage}	}			
	\subfigure[conditional expected fill size]{
		\begin{minipage}[b]{0.45\textwidth}
			\centering
			\includegraphics[width=\textwidth]{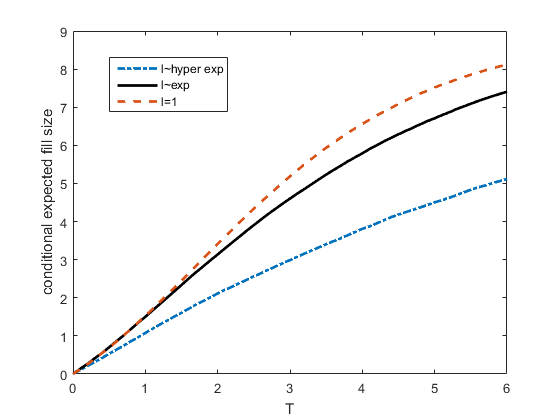}
	\end{minipage}}	
	\caption{\emph{(a) The probability of one fill in \eqref{eq:cond-hit} and (b) the expected fill size in \eqref{eq:conditional-expec} for a resting order with size $x=10$, conditioning on the event $\{N_2=1,\ell_1=1\}$. Here, $\mu(t) \equiv 1$ and the exciting function $h(t)= \frac{9}{10} \frac{1}{(1+t)^2}$ are fixed. The three curves correspond to three different distributions with a unit mean: (a) $\ell_i \equiv 1$ (red); (b) $\ell_i$ follows an exponential distribution (black); and (c) $\ell_i$ follows a mixture of an exponential distribution with mean 5 and an exponential distribution with mean 1/5 (blue).}}
	\label{fig4}
\end{figure}

For the conditional probability of one fill in Figure~\ref{fig4}(a), we note that
it is biggest when the trade size $\ell_i$ follows a hyper-exponential distribution with mean one, and it is smallest when the trade size is constantly one. Let us explain. It is clear from \eqref{eq:cond-prob2} that this conditional probability of one fill depends monotonically on the following Laplace transform of the random variable $\ell_1$:
\begin{align}\label{eq:lap-h}
\mathbb{E} \left[e^{-\int_{0}^{T+t-s} h(u)du \cdot \ell_1}\right].
\end{align}
If we denote $\alpha:=\int_{0}^{T+t-s} h(u)du \ge 0$, then computing the Laplace transform in \eqref{eq:lap-h} for the three distributions of $\ell_1$ (hyper-exponential, exponential and constant one) yields
\[ \frac{1}{6} \cdot \frac{0.2}{0.2 +\alpha } + \frac{5}{6} \cdot \frac{5}{5 +\alpha } \ge \frac{1}{1+\alpha} \ge e^{-\alpha}. \]
Now the observation in Figure~\ref{fig4}(a) follows from \eqref{eq:cond-prob2} and the above inequalities.  

\textbf{Varying the exciting function while fixing an exponential trade size distribution.} We next investigate how the exciting function $h$ impacts the performance quantities. For illustration purposes, we consider a family of power-law exciting functions with different tail behaviors:
\begin{equation} \label{eq:h-gamma}
h^{\gamma}(t) = \frac{C}{(1+t)^{\gamma}}, \quad \text{for $C>0, \quad \gamma>1$}.
\end{equation}
In particular, $\Vert h^{\gamma}\Vert_{L^{1}}=\int_{0}^{\infty}h^{\gamma}(t)dt = \frac{C}{\gamma -1}$. In the literature, this quantity $\Vert h^{\gamma}\Vert_{L^{1}}$ is usually interpreted as a branching ratio, i.e., the expected number of events generated by any parent event. In the following, we will fix $C=0.9$, and $\ell_i$ follows an exponential distribution with mean 1 and the baseline intensity $\mu=1$. We consider three different exciting functions $h^{\gamma}(t)= \frac{9}{10} \frac{1}{(1+t)^\gamma}$ corresponding to $\gamma=2,2.5$ and 3. This allows us to better understand how the exciting functions impact the performance quantities.

We first plot in Figure~\ref{fig5} the expected time-to-complete-fill of an order as a function of the order size $x$, for different exciting functions $h^{\gamma}$ in \eqref{eq:h-gamma}. As one can observe from Figure~\ref{fig5}, the larger the $\gamma$,
the longer it takes on average to fill a given order completely.

\begin{figure}[h]
	\centering		
	\includegraphics[width=0.6\textwidth]{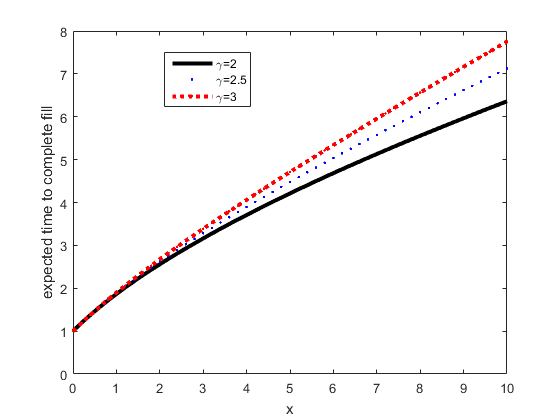}
	\caption{\emph{Expected time-to-complete-fill $\mathbb{E}[\sigma_x]$ in \eqref{eq:E-TTF} vs order size $x$ for different exciting functions $h^{\gamma}$ defined in \eqref{eq:h-gamma}. Here, $\ell_i$ follows an exponential distribution with mean 1 and the baseline intensity $\mu(t) \equiv 1$ for the Hawkes model.}}
	\label{fig5}
\end{figure}

Next, we plot in Figure~\ref{fig6} the expected fill rate for a given order with size $x=10$, as a function of the resting time of the order. Consistent with Figure~\ref{fig5}, the larger the $\gamma,$ the harder to fill an order and hence the smaller the expected fill rate for a given resting time.

\begin{figure}[h]
	\centering		
	\includegraphics[width=0.6\textwidth]{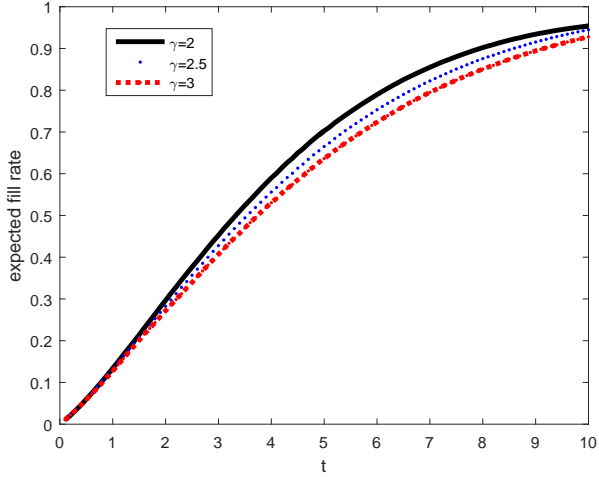}
	\caption{\emph{Expected fill rate $\mathbb{E} [\min\{L_{t},x\}]/x$ in \eqref{eq:fillrate} vs rest time $t$ for a given order with size $x=10$. Here, $\ell_i$ follows an exponential distribution with mean 1 and the baseline intensity $\mu=1$ for the Hawkes model. The three curves correspond to three different exciting functions $h^{\gamma}$ defined in \eqref{eq:h-gamma}.}}
	\label{fig6}
\end{figure}

Let us explain the phenomenon observed in Figures~\ref{fig5} and \ref{fig6}. For $\gamma_1>\gamma_2>1$, we find from \eqref{eq:h-gamma} that $h^{\gamma_1}(t)<h^{\gamma_2}(t)$ for all $t\ge 0$. This implies that one can find a common probability space such that the associated point processes satisfy $L_t^{\gamma_1}\leq L_t^{\gamma_2}$ for all $t$ almost surely. Then the observations in Figures~\ref{fig5} and \ref{fig6} readily follow from the formulas for the expected time-to-complete-fill in \eqref{eq:E-TTF} and the expected fill rate in \eqref{eq:fillrate}.

We further investigate the conditional probability of another fill and the conditional expected fill size for different exciting functions, for a given resting order of size 10. Again, the event conditioned on is $\{N_2=1,\ell_1=1\}$, i.e., there is a fill of size one in the past two units of time. These two performance quantities are plotted in Figure~\ref{fig7}. We find from Figure~\ref{fig7}(a) that, unlike in Figure~\ref{fig4}(a), there is no monotonicity for the conditional probability of one fill when we vary $\gamma.$ This is not surprising as we can see from the formula \eqref{eq:cond-prob2} that this conditional probability depends on the exciting function in a delicate way.

\begin{figure}[h]
	\centering
	\subfigure[conditional probability of one fill]{
		\begin{minipage}[b]{0.45\textwidth}
			\centering
			\includegraphics[width=\textwidth]{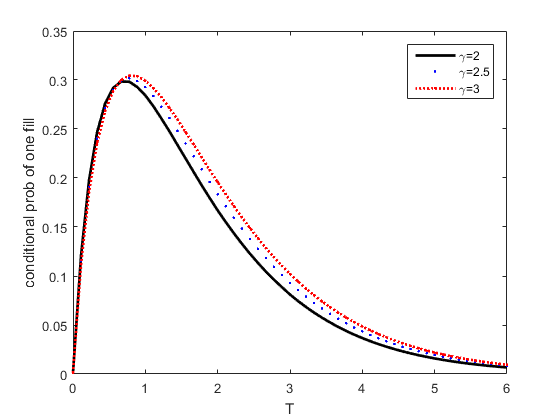}
	\end{minipage}	}			
	\subfigure[conditional expected fill size]{
		\begin{minipage}[b]{0.45\textwidth}
			\centering
			\includegraphics[width=\textwidth]{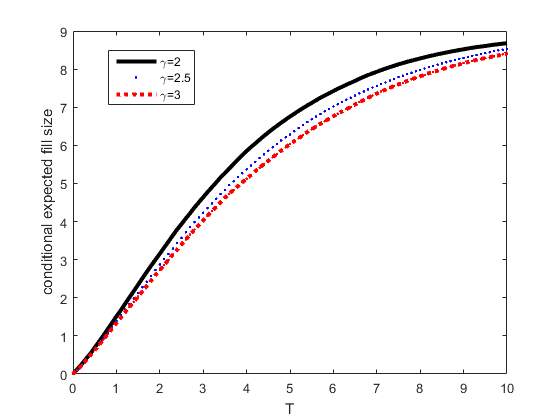}
	\end{minipage}}	
	\caption{\emph{(a) The probability of one fill in \eqref{eq:cond-hit} and (b) the expected fill size \eqref{eq:conditional-expec} for a resting order of size $x=10$, conditioning on the event $\{N_2=1,\ell_1=1\}$, i.e., there is a fill of size one in the past two units of time. Here, $\ell_i$ follows an exponential distribution with mean 1 and the baseline intensity $\mu=1$ for the Hawkes model. The three curves correspond to three different exciting functions $h^{\gamma}$ defined in \eqref{eq:h-gamma}.}}
	\label{fig7}
\end{figure}

\textbf{The effect of the baseline intensity $\mu(t)$ on performance metrics.}
So far, all the numerical examples on performance metrics are presented assuming a constant baseline intensity $\mu(t) \equiv 1$. We now
briefly discuss the effect of a non-constant baseline intensity of the Hawkes process on performance metrics. Such a time-dependent baseline intensity $\mu(t)$ could represent, for example, the intraday pattern of dark pool liquidity.

For illustration purposes, we focus on the representative performance metric: the expected fill rate of a resting dark order given in \eqref{eq:fillrate}. We compare in Figure~\ref{fig-mut} the case of a constant baseline intensity $\mu(t) \equiv 1$ with the following two cases where $\mu(t)$ is piecewise constant:
\begin{align}\label{eq:mu-t}
  \mu_1(t) =
  \begin{cases}
    2, & \text{for } 0 \leq t \leq 4, \\
   0.5 , & \text{for } 4 < t \leq 8, \\
   1, & \text{for } t >8.
  \end{cases}  \quad \text{and} \quad
  \mu_2(t) =
  \begin{cases}
    0.5, & \text{for } 0 \leq t \leq 4, \\
   2 , & \text{for } 4 < t \leq 8, \\
   1, & \text{for } t > 8.
  \end{cases}
\end{align}
We can observe from Figure~\ref{fig-mut} that the expected fill rate of a resting dark order depends on the baseline intensity of the Hawkes execution process in a delicate way. For the initial time period $[0, 4]$, as $\mu_1(t)>1> \mu_2(t)$, it follows that a higher baseline intensity of the Hawkes execution process leads to a higher expected fill rate of the resting order. On the other hand, on the time interval $(4,8]$, we have $\mu_2(t)>1>\mu_1(t)$.
Compared with a Hawkes execution process with a constant one baseline intensity, the expected fill rate of a resting dark order during the time interval $(4, 8]$ is still higher when the trades follow a Hawkes arrival process with a baseline intensity $\mu_1(t)<1$. In addition, the expected fill rate of an order may also become higher when the Hawkes process has a baseline intensity $\mu_2(t)>1$. These two observations are due to the fact the intensity of a Hawkes process depends on both the baseline intensity and its own entire history (i.e. the past occurrence of trades).

\begin{figure}[h]
	\centering		
	\includegraphics[width=0.6\textwidth]{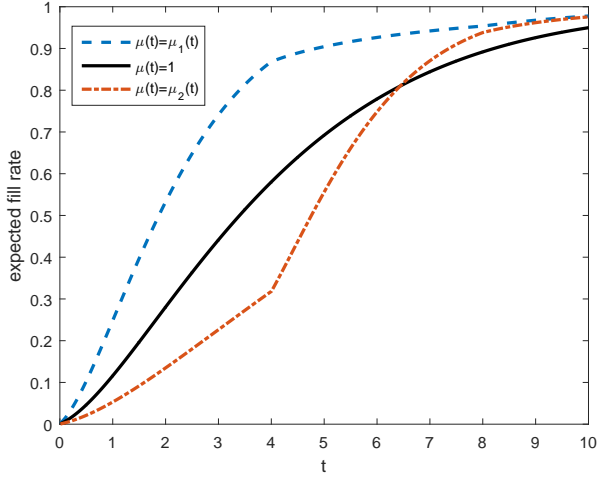}
	\caption{\emph{Expected fill rate $\mathbb{E} [\min\{L_{t},x\}]/x$ in \eqref{eq:fillrate} vs rest time $t$ for a given order with size $x=10$. Here, $\ell_i$ follows an exponential distribution with mean 1 and the exciting function $h(t)=\frac{9}{10} \frac{1}{(1+t)^2}$ for the Hawkes model. The three curves correspond to three different baseline intensity functions $\mu(t)$: $\mu(t) \equiv 1$ and $\mu_1(t), \mu_2(t)$ given in \eqref{eq:mu-t}.}}
	\label{fig-mut}
\end{figure}

\subsection{Non-empty dark pools} \label{sec:nonempty}
The performance formulas derived in Section~\ref{sec:performance} can be generalized to non-empty dark pools. To illustrate, we consider computing time-to-first-fill,
time-to-complete-fill and expected fill rate of a posted dark order.

Suppose at time zero when the investor's midpoint peg buy order of size $x>0$ reaches the dark pool, the liquidity size $Y$ in the pool is a random variable with a known or estimated cumulative distribution function $F_Y(y)$ which could possibly have a mass at zero (see e.g. \citet{Ganchev2010}). Here $Y>0$ represents the size of existing buy orders at the midpoint, and $Y<0$ represents the size of sell orders resting at the midpoint in the pool. In particular, there will be an immediate execution of the investor's buy order at time zero if $Y<0$. In this case, when $Y\le -x$, then the investor's dark order is completely filled at time zero. Otherwise for $Y \in (-x, 0)$, the dark buy order will get partially filled. The trader on the other side
of the completed trade then realizes there could potentially be
more liquidity on the opposite side of his trade, and then re-routes his other orders to this pool. Other information-seekers may also notice the trade and submit orders to this pool. Hence, this trade against resting sell orders at time zero may also incur a jump of the intensity of the arriving sell trades.

 Mathematically, with a random liquidity size $Y$ in the dark venue, the intensity of the Hawkes process $N$ modeling the executions of the dark buy order will be modified as follows (defined till the time the dark order is completely filled):
\begin{equation} \label{eq:modifydynamics}
\lambda_{t}=\mu+ \min\{x, |Y|\}\cdot 1_{Y<0} \cdot h(t) + \sum_{0<\tau_i<t}  h(t-\tau_i) \cdot \ell_i ,
\end{equation}
where $\min\{x, |Y|\}\cdot 1_{Y<0}$ represents the size of a fill at time zero, and the impact on the trading intensity also decays according to the exciting function $h$.
Hence, conditioned on $Y=y>0$, suppose these existing buy orders with total size $y$ also rest in the pool until full execution, then the Hawkes process $(N,L)$ will be essentially the same as in the case of an empty dark pool. On the other hand, conditioned on $Y=y<0$, the intensity $\lambda$ follows a different dynamics where now the baseline intensity is time-varying as given in the first two parts of the expression \eqref{eq:modifydynamics}.

We now derive the performance quantities and still use the same notations as in Section~\ref{sec:performance} for simplicity. First, with a random liquidity size $Y$ in the dark pool,
the time-to-first-fill of the investor's dark buy order is given by
\begin{equation*}
\tau(1)=\inf\{t \ge 0:L_t >Y\}.
\end{equation*}
Hence we obtain
\begin{eqnarray*}
\mathbb{P}(\tau(1)=0) &=& \mathbb{P}(Y<0)= F_Y(0-),\\
\mathbb{P}(\tau(1)>t)& =& \mathbb{P}(L_t \le Y) = \int_0^{\infty}\mathbb{P}(L_t \le y) dF_{Y}(y), \quad \text{for $t \ge 0.$}
\end{eqnarray*}
Since we have derived the transform of $L_t$, we can then use inverse Laplace transform to get $\mathbb{P}(L_t \le y)$ for $y>0$, and hence compute the distribution and the expectation of $\tau(1)$.

Second, the time-to-complete-fill of the investor's dark buy order with size $x>0$ is given by
 \[\sigma_x=\inf\{t \ge 0: L_t\ge x +Y \},\]
hence we have
\begin{eqnarray}
\mathbb{P}(\sigma_x=0) &=& \mathbb{P}(Y \le -x) =F_Y(-x), \label{eq:tau-x-1}\\
\mathbb{P}(\sigma_x>t) & =& \mathbb{P}(L_t < x+ Y) , \quad \text{for $t \ge 0.$}  \label{eq:tau-x-2}
\end{eqnarray}
To obtain the distribution of $\sigma_x$, it suffices to compute $\mathbb{P}(L_t < x+ Y)$. Note that
\begin{eqnarray}
\mathbb{P}(L_t < x+ Y) & =& \mathbb{P}(L_t -Y < x) \nonumber \\ &=& \int_{-x}^{0}\mathbb{P}(L_t < x + y| Y=y) dF_{Y}(y)  \nonumber \\
&&\qquad\qquad\qquad+\int_{0}^{\infty}\mathbb{P}(L_t < x + y) dF_{Y}(y). \label{eq:Lt-Y}
\end{eqnarray}
So it only remains to compute $\mathbb{P}(L_t < x + y| Y=y)$ for $-x < y <0.$ Conditioned on $Y=y \in (-x, 0)$, i.e., there is a partial execution of the dark buy order at time zero (which matches with resting sell orders in the pool), then from \eqref{eq:modifydynamics} we infer that the intensity of the Hawkes process becomes
\begin{equation*}
\lambda_{t}=\mu + |y| \cdot h(t) + \sum_{0<\tau_i<t}  h(t-\tau_i) \cdot \ell_i ,
\end{equation*}
where the baseline intensity $\mu(t)=\mu + |y| \cdot h(t)$ is deterministic and time-dependent.
Since we have computed the Laplace transform of $(N_t, L_t)$ where the baseline intensity of the Hawkes process can be time-varying, we can then use inverse Laplace transform to compute $\mathbb{P}(L_t < x + y| Y=y)$ for $-x < y <0$. Now
the distribution of time-to-complete-fill of the midpoint dark order can be computed using \eqref{eq:tau-x-1}, \eqref{eq:tau-x-2} and \eqref{eq:Lt-Y}. The expected time-to-complete-fill $\mathbb{E}[\sigma_x]$ also readily follows.

Finally, the expected fill rate of the investor's resting midpoint dark order of size $x$, in a given time interval $[0, t]$, is given by
  \begin{equation}\label{eq:fillrate-nonempty}
 \frac{1}{x} \cdot \mathbb{E} [\min\{(L_{t}-Y)^+,x\}] = \frac{1}{x} \cdot {\int_{0}^x \mathbb{P}((L_{t}-Y)^+ >z)dz},
 \end{equation}
where $a^+:=\max\{a,0\}$ for a real number $a$. This expected fill rate can hence also be readily computed as we have derived the distribution of $L_{t}-Y$ in \eqref{eq:Lt-Y}.

 Therefore, if the dark pool is non-empty at the time of the dark order placement by the investor, these performance quantities can still be similarly derived and efficiently numerically computed.

 \subsubsection*{Numerical examples}
 For illustration purposes, we plot in Figure~\ref{fig8} the expected fill rate of a given midpoint peg order when the initial liquidity size $Y$ in the dark pool has the following distribution: $\mathbb{P}(Y=0)=0.3$ and when $Y \ne 0$, it has a density function
 \begin{equation*}
 f_Y(y)=0.35\cdot k |y|^{k-1} e^{-|y|^k}, \quad \text{for $y \ne 0.$}
 \end{equation*}
 That is, $Y$ has a mass at zero, and it follows a two-sided Weibull distribution with scale parameter 1 and shape parameter $k$. Note that when $k \in (0,1),$ the tail of $Y$ is heavier than that of a two-sided exponential distribution. In addition, a smaller shape parameter $k$ implies a heavier tail of the liquidity size $Y$.
Such a choice of $Y$ is motivated by \citet{Ganchev2010} which empirically finds that the distribution of volume in dark pools is heavy-tailed: often, there is no volume available, but sometimes very large volume is present.

\begin{figure}[h]
	\centering		
	\includegraphics[width=0.6\textwidth]{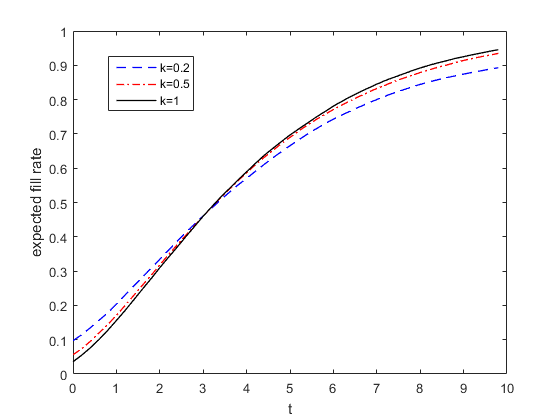}
	\caption{\emph{Expected fill rate in \eqref{eq:fillrate-nonempty} vs rest time $t$ for a given resting order with size $x=10$. Here, $\mu=1$, the trade size $\ell_i$ follows an exponential distribution with mean 1 and the exciting function $h(t)= \frac{9}{10} \frac{1}{(1+t)^2}$. The initial liquidity in the pool is modeled by a random variable $Y$ which has a mass 0.3 at zero, and when $Y \ne 0$, it follows a two-sided Weibull distribution with scale parameter 1 and shape parameter $k.$ }}
	\label{fig8}
\end{figure}

We observe from Figure~\ref{fig8} that the expected fill rate of a posted order at $t=0$ in a non-empty pool is greater than zero, which is different from that in an empty pool. Intuitively, this is clear as there could be contra-side sell orders resting at the midpoint in a non-empty pool which triggers trades at time zero when the investor posts a buy order at the midpoint. In fact, mathematically we can deduce from \eqref{eq:fillrate-nonempty} that the expected fill rate at time zero is simply
  \begin{equation*} \label{eq:fill-0}
\frac{1}{x} \cdot {\int_{0}^x \mathbb{P}((-Y)^+ >z)dz},
 \end{equation*}
since $L_0=0.$ Hence, the initial percentage of fill when the order is posted critically depends on both the posted order size $x$, and the distribution of the volume of contra-side resting orders $(-Y)^+$. While we also observe from Figure~\ref{fig8} that a smaller shape parameter $k$ of the liquidity $Y$ leads to a larger expected fill rate at time zero, we remark that our extensive numerical experiments show this is not generally true for any order size $x$.

\section{Conclusions}\label{sec:conclusion}

We study the Hawkes process, a self-exciting point process, where
the baseline intensity is time-dependent, the exciting function is a general function and the jump sizes of the intensity process are i.i.d. non-negative random variables.
We obtain closed-form formulas for the Laplace transform, moments and the distribution of the Hawkes process. We apply these results to dark pool trading and analyze various performance metrics of a resting dark order which trades against contra-side marketable
orders arriving according to a Hawkes process. These performance quantities can be useful for strategic allocation of liquidity among different pools to reduce market impact and execution costs in portfolio trading.

%
%



\appendix
\section{Proof of Proposition~\ref{prop:moments}}
\begin{proof}[Proof of Proposition~\ref{prop:moments}] We first compute the first two moments of the counting process $N$.
By differentiating the Laplace transform of the counting process $N$
with respect to (w.r.t.) $\theta$ in Corollary \ref{SingleTransform}, we get
\begin{equation*}
\frac{\partial}{\partial\theta}\mathbb{E}[e^{-\theta N_{T}}]
=\int_{0}^{T}\mu(T-s)\frac{\partial}{\partial\theta}F_{N}(s)ds
e^{\int_{0}^{T}\mu(T-s)(F_{N}(s)-1)ds},
\end{equation*}
and by differentiating w.r.t. $\theta$ again, we get
\begin{align*}
\frac{\partial^{2}}{\partial\theta^{2}}\mathbb{E}[e^{-\theta N_{T}}]
&=\int_{0}^{T}\mu(T-s)\frac{\partial^{2}}{\partial\theta^{2}}F_{N}(s)ds
e^{\int_{0}^{T}\mu(T-s)(F_{N}(s)-1)ds}
\\
&\qquad\qquad
+\left(\int_{0}^{T}\mu(T-s)\frac{\partial}{\partial\theta}F_{N}(s)ds\right)^{2}
e^{\int_{0}^{T}\mu(T-s)(F_{N}(s)-1)ds}.
\nonumber
\end{align*}
By differentiating both sides of \eqref{FN} w.r.t. $\theta$, we get
\begin{equation}\label{FNDiff1}
\frac{\partial}{\partial\theta}F_{N}(t)
=\mathbb{E}\left[\left(-1+\int_{0}^{t}\ell_{1}h(s)\frac{\partial}{\partial\theta}
F_{N}(t-s)ds\right)e^{-\theta+\int_{0}^{t}\ell_{1}h(s)(F_{N}(t-s)-1)ds}\right].
\end{equation}
By differentiating w.r.t. $\theta$ again, we get
\begin{align}\label{FNDiff2}
\frac{\partial^{2}}{\partial\theta^{2}}F_{N}(t)
&=\mathbb{E}\left[\left(-1+\int_{0}^{t}\ell_{1}h(s)\frac{\partial}{\partial\theta}
F_{N}(t-s)ds\right)^{2}e^{-\theta+\int_{0}^{t}\ell_{1}h(s)(F_{N}(t-s)-1)ds}\right]
\\
&\qquad\qquad
+\mathbb{E}\left[\int_{0}^{t}\ell_{1}h(s)\frac{\partial^{2}}{\partial\theta^{2}}
F_{N}(t-s)ds
e^{-\theta+\int_{0}^{t}\ell_{1}h(s)(F_{N}(t-s)-1)ds}\right].
\nonumber
\end{align}
By letting $\theta=0$ in \eqref{FNDiff1}, we get
\begin{equation*}
\frac{\partial}{\partial\theta}F_{N}(t)\bigg|_{\theta=0}
=-1+\int_{0}^{t}\mathbb{E}[\ell_{1}]h(s)\frac{\partial}{\partial\theta}
F_{N}(t-s)\bigg|_{\theta=0}ds.
\end{equation*}
This implies that
\begin{equation*}
\frac{\partial}{\partial\theta}F_{N}(t)\bigg|_{\theta=0}=-\psi_{1}(t),
\end{equation*}
where $\psi_{1}(\cdot)$ is defined in \eqref{psi1} and thus
\begin{equation*}
\mathbb{E}[N_{T}]=-\frac{\partial}{\partial\theta}\mathbb{E}[e^{-\theta N_{T}}]\bigg|_{\theta=0}
=-\int_{0}^{T}\mu(T-s)\frac{\partial}{\partial\theta}F_{N}(s)\bigg|_{\theta=0}ds
=\int_{0}^{T}\mu(T-s)\psi_{1}(s)ds.
\end{equation*}
By letting $\theta=0$ in \eqref{FNDiff2}, we get
\begin{equation*}
\frac{\partial^{2}}{\partial\theta^{2}}F_{N}(t)\bigg|_{\theta=0}
=(\psi_{1}(t))^{2}
+\int_{0}^{t}\mathbb{E}[\ell_{1}]h(s)\frac{\partial^{2}}{\partial\theta^{2}}F_{N}(t-1)\bigg|_{\theta=0}ds.
\end{equation*}
By the definition of $\psi_{2}(\cdot)$ in \eqref{psi2}, we have
$\frac{\partial^{2}}{\partial\theta^{2}}F_{N}(t)\big|_{\theta=0}=\psi_{2}(t)$.
Finally, we conclude that
\begin{align*}
\mathbb{E}[N_{T}^{2}]
&=\frac{\partial^{2}}{\partial\theta^{2}}\mathbb{E}[e^{-\theta N_{T}}]\bigg|_{\theta=0}
\\
&=\int_{0}^{T}\mu(T-t)\frac{\partial^{2}}{\partial\theta^{2}}F_{N}(s)\bigg|_{\theta=0}ds
+\left(\int_{0}^{T}\mu(T-s)\frac{\partial}{\partial\theta}F_{N}(s)\bigg|_{\theta=0}ds\right)^{2}
\nonumber
\\
&=\int_{0}^{T}\mu(T-t)\psi_{2}(t)dt+\left(\int_{0}^{T}\mu(T-s)\psi_{1}(t)dt\right)^{2}.
\nonumber
\end{align*}

We next compute the first two moments of the process $L$. We can compute from \eqref{FL} that
\begin{equation*}
\frac{\partial}{\partial\theta}
F_{L}(t)\bigg|_{\theta=0}
=-\mathbb{E}[\ell_{1}]+\int_{0}^{t}\mathbb{E}[\ell_{1}]h(s)\frac{\partial}{\partial\theta}
F_{L}(t-s)\bigg|_{\theta=0}ds,
\end{equation*}
which implies that
\begin{equation*}
\frac{\partial}{\partial\theta}
F_{L}(t)\bigg|_{\theta=0}
=\mathbb{E}[\ell_{1}]\psi_{1}(t).
\end{equation*}
Hence,
\begin{equation*}
\mathbb{E}[L_{T}]=-\frac{\partial}{\partial\theta}\mathbb{E}[e^{-\theta L_{T}}]\bigg|_{\theta=0}
=-\int_{0}^{T}\mu(T-s)\frac{\partial}{\partial\theta}F_{L}(s)\bigg|_{\theta=0}ds
=\mathbb{E}[\ell_{1}]\int_{0}^{T}\mu(T-s)\psi_{1}(s)ds.
\end{equation*}
We can also compute from \eqref{FL} that
\begin{align*}
\frac{\partial^{2}}{\partial\theta^{2}}F_{L}(t)
&=\mathbb{E}\left[\left(-\ell_{1}+\int_{0}^{t}\ell_{1}h(s)\frac{\partial}{\partial\theta}
F_{L}(t-s)ds\right)^{2}e^{-\theta\ell_{1}+\int_{0}^{t}\ell_{1}h(s)(F_{L}(t-s)-1)ds}\right]
\\
&\qquad\qquad
+\mathbb{E}\left[\int_{0}^{t}\ell_{1}h(s)\frac{\partial^{2}}{\partial\theta^{2}}
F_{L}(t-s)ds
e^{-\theta\ell_{1}+\int_{0}^{t}\ell_{1}h(s)(F_{L}(t-s)-1)ds}\right].
\nonumber
\end{align*}
Therefore, by the definition of $\psi_{1}$,
\begin{align*}
\frac{\partial^{2}}{\partial\theta^{2}}F_{L}(t)\bigg|_{\theta=0}
&=\mathbb{E}[\ell_{1}^{2}]
\left(-1+\int_{0}^{t}h(s)\frac{\partial}{\partial\theta}
F_{L}(t-s)\bigg|_{\theta=0}ds\right)^{2}
+\int_{0}^{t}\mathbb{E}[\ell_{1}]h(s)\frac{\partial^{2}}{\partial\theta^{2}}
F_{L}(t-s)\bigg|_{\theta=0}ds
\\
&=\mathbb{E}[\ell_{1}^{2}](\psi_{1}(t))^{2}
+\int_{0}^{t}\mathbb{E}[\ell_{1}]h(s)\frac{\partial^{2}}{\partial\theta^{2}}
F_{L}(t-s)\bigg|_{\theta=0}ds.
\end{align*}
Now, by recalling that $\psi_{3}(t)= \frac{\partial^{2}}{\partial\theta^{2}}F_{L}(t)\bigg|_{\theta=0}$, we conclude that
\begin{align*}
\mathbb{E}[L_{T}^{2}]
&=\frac{\partial^{2}}{\partial\theta^{2}}\mathbb{E}[e^{-\theta L_{T}}]\bigg|_{\theta=0}
\\
&=\int_{0}^{T}\mu(T-s)\frac{\partial^{2}}{\partial\theta^{2}}F_{L}(s)\bigg|_{\theta=0}ds
+\left(\int_{0}^{T}\mu(T-s)\frac{\partial}{\partial\theta}F_{L}(s)\bigg|_{\theta=0}ds\right)^{2}
\nonumber
\\
&=\int_{0}^{T}\mu(T-t)\psi_{3}(t)dt+(\mathbb{E}[\ell_{1}])^{2}\left(\int_{0}^{T}\mu(T-t)\psi_{1}(t)dt\right)^{2}.
\nonumber
\end{align*}

Finally, let us show that \eqref{psi1}, \eqref{psi2} and \eqref{psi3}
have unique solutions. We will only show uniqueness for the solution of \eqref{psi1} here,
and the argument is the same for \eqref{psi2} and \eqref{psi3}.
Assume that \eqref{psi1} has two solutions, say $\psi_{1}^{(1)}$ and $\psi_{1}^{(2)}$.
Then, for any $0\leq t\leq T$, we have
\begin{align*}
\left|\psi_{1}^{(1)}(t)-\psi_{1}^{(2)}(t)\right|
&\leq\int_{0}^{t}h(t-s)\mathbb{E}[\ell_{1}]\left|\psi_{1}^{(1)}(s)-\psi_{1}^{(2)}(s)\right|ds
\\
&\leq\Vert h\Vert_{L^{\infty}[0,T]}\mathbb{E}[\ell_{1}]\int_{0}^{t}\left|\psi_{1}^{(1)}(s)-\psi_{1}^{(2)}(s)\right|ds.
\end{align*}
By Gronwall's inequality, we conclude that $\psi_{1}^{(1)}=\psi_{1}^{(2)}$ on $[0,T]$.
\end{proof}

\section{Proof of Proposition~\ref{prop:mass}}

\begin{proof}[Proof of Proposition~\ref{prop:mass}]
Note that for any $z\in\mathbb{C}$ with $|z|\leq 1$,
by considering $z=e^{-\theta}$, we obtain from the Laplace transform of $N_T$ that
\begin{equation}\label{z1}
\mathbb{E}[z^{N_{T}}]=e^{\int_{0}^{T}\mu(T-s)(F_{N}(s)-1)ds},
\end{equation}
where (with slight abuse of notations) $F_N(t)$ depends on $z$ and it is given by
\begin{equation}\label{z2}
F_{N}(t)=z\cdot \mathbb{E}\left[e^{\int_{0}^{t}\ell_{1}h(s)(F_{N}(t-s)-1)ds}\right], \quad \text{for any $0\leq t\leq T$.}
\end{equation}
It is easy to see that
\begin{equation*}
\mathbb{E}[z^{N_{T}}]
=\sum_{k=0}^{\infty}z^{k}\mathbb{P}(N_{T}=k),
\end{equation*}
and hence
\begin{equation*}
\mathbb{P}(N_{T}=k)=\frac{1}{k!}\frac{\partial^{k}}{\partial z^{k}}\mathbb{E}[z^{N_{T}}]\bigg|_{z=0}.
\end{equation*}
Let us recall the celebrated Fa\`{a} di Bruno's formula, for any smooth functions $f$ and $g$:
\begin{equation}\label{BrunoFormula}
\frac{d^{n}}{dx^{n}}f(g(x))
=\sum\frac{n!}{m_{1}!1!^{m_{1}}m_{2}!2!^{m_{2}}\cdots m_{n}!n!^{m_{n}}}
f^{(m_{1}+\cdots+m_{n})}(g(x))\prod_{j=1}^{n}(g^{(j)}(x))^{m_{j}},
\end{equation}
where the summation is over all $n$-tuples of nonnegative integers $(m_1, \ldots, m_n)$ satisfying the constraint $1\cdot m_{1}+2\cdot m_{2}+3\cdot m_{3}+\cdots+n\cdot m_{n}=n$.
Notice that
\begin{equation*}
\mathbb{E}[z^{N_{T}}]=e^{-\int_{0}^{T}\mu(T-s)ds}e^{\int_{0}^{T}\mu(T-s)F_{N}(s)ds},
\end{equation*}
and $F_{N}=0$ for $z=0$. By applying Fa\`{a} di Bruno's formula \eqref{BrunoFormula}
(with $f(z)=e^{z}$, $g(z)=\int_{0}^{T}\mu(T-s)F_{N}(s)ds$ and $n=k$), we get:
\begin{align*}
\frac{\partial^{k}}{\partial z^{k}}\mathbb{E}[z^{N_{T}}]\bigg|_{z=0}
&=e^{-\int_{0}^{T}\mu(T-s)ds}\sum\frac{k!}{m_{1}!1!^{m_{1}}m_{2}!2!^{m_{2}}\cdots m_{k}!k!^{m_{k}}}
\\
&\qquad\qquad\qquad
\cdot\prod_{j=1}^{k}\left(\int_{0}^{T}\mu(T-s)F_{N,j}(s)ds\right)^{m_{j}},
\end{align*}
where
\begin{equation*}
F_{N,j}(t):=\frac{\partial^{j}}{\partial z^{j}}F_{N}(t)\bigg|_{z=0}.
\end{equation*}
>From \eqref{z2} it is clear that $F_{N,0}(t)=0$, and
\begin{equation*}
F_{N,1}(t)=\mathbb{E}\left[e^{-\int_{0}^{t}\ell_{1}h(s)ds}\right].
\end{equation*}
By applying Fa\`{a} di Bruno's formula again, we get for any $j\geq 2$
the following recursive equation:
\begin{align*}
F_{N,j}(t)&=j\frac{\partial^{j-1}}{\partial z^{j-1}}\mathbb{E}\left[e^{\int_{0}^{t}\ell_{1}h(s)(F_{N}(t-s)-1)ds}\right]\bigg|_{z=0}
\\
&=j\sum\frac{(j-1)!}{m_{1}!1!^{m_{1}}m_{2}!2!^{m_{2}}\cdots m_{j-1}!(j-1)!^{m_{j-1}}}
\\
&\qquad\qquad
\cdot
\mathbb{E}\left[e^{-\int_{0}^{T}\ell_{1}h(s)ds}\prod_{i=1}^{j-1}\left(\int_{0}^{t}\ell_{1}h(s)F_{N,i}(t-s)ds\right)^{m_{i}}\right].
\end{align*}
The proof is therefore completed.
\end{proof}

\section{Proof of Proposition~\ref{prop:LT-distri}}
\begin{proof}[Proof of Proposition~\ref{prop:LT-distri}]
The proof also relies on Fa\`{a} di Bruno's formula.
Note that for any $|z|<1$,
\begin{equation*}
\mathbb{E}\left[z^{\frac{1}{\delta}L_{T}}\right]
=\sum_{k=0}^{\infty}z^{k}\mathbb{P}(L_{T}=k\delta),
\end{equation*}
and thus,
\begin{equation*}
\mathbb{P}(L_{T}=k\delta)
=\frac{1}{k!}\frac{d^{k}}{dz^{k}}\mathbb{E}\left[z^{\frac{1}{\delta}L_{T}}\right]\bigg|_{z=0}.
\end{equation*}
For any $z\in\mathbb{C}$ with $|z|<1$,
\begin{equation*}
\mathbb{E}\left[z^{\frac{1}{\delta}L_{T}}\right]
=e^{\int_{0}^{T}\mu(T-s)(F_{L}(s)-1)ds},
\end{equation*}
where for any $0\leq t\leq T$,
\begin{equation*}
F_{L}(t)=\mathbb{E}\left[z^{\frac{1}{\delta}\ell_{1}}e^{\int_{0}^{t}\ell_{1}h(s)(F_{L}(t-s)-1)ds}\right]
=\sum_{k=1}^{\infty}z^{k}e^{\int_{0}^{t}k\delta h(s)(F_{L}(t-s)-1)ds}p_{k}.
\end{equation*}

By applying Fa\`{a} di Bruno's formula, we get:
\begin{align*}
\mathbb{E}\left[z^{\frac{1}{\delta}L_{T}}\right]\bigg|_{z=0}
&=e^{-\int_{0}^{T}\mu(T-s)ds}\sum\frac{k!}{m_{1}!1!^{m_{1}}m_{2}!2!^{m_{2}}\cdots m_{k}!k!^{m_{k}}}
\\
&\qquad\qquad\qquad
\cdot\prod_{j=1}^{k}\left(\int_{0}^{T}\mu(T-s)F_{L,j}(s)ds\right)^{m_{j}},
\end{align*}
where
\begin{equation*}
F_{L,j}(t):=\frac{\partial^{j}}{\partial z^{j}}F_{L}(t)\bigg|_{z=0}.
\end{equation*}
It is clear that $F_{L,0}(t)=0$, and
\begin{equation*}
F_{L,1}(t)=p_{1}e^{-\int_{0}^{t}\delta h(s)ds}.
\end{equation*}
By applying Fa\`{a} di Bruno's formula again, we get for any $j\geq 2$
the following recursive equation:
\begin{align*}
F_{L,j}(t)&=\frac{\partial^{j}}{\partial z^{j}}\sum_{k=0}^{\infty}z^{k}e^{\int_{0}^{t}k\delta h(s)(F_{L}(t-s)-1)ds}p_{k}\bigg|_{z=0}
\\
&=\sum_{k=0}^{\infty}\sum_{i=0}^{j}\binom{j}{i}\frac{d^{i}}{dz^{i}}z^{k}\bigg|_{z=0}
\frac{d^{j-i}}{dz^{j-i}}
e^{\int_{0}^{t}k\delta h(s)(F_{L}(t-s)-1)ds}p_{k}\bigg|_{z=0}
\\
&=\sum_{k=0}^{j}\binom{j}{k}k!
\frac{d^{j-k}}{dz^{j-k}}
e^{\int_{0}^{t}k\delta h(s)(F_{L}(t-s)-1)ds}p_{k}\bigg|_{z=0}
\\
&=\sum_{k=0}^{j}\binom{j}{k}k!
e^{-\int_{0}^{t}k\delta h(s)ds}p_{k}
\sum\frac{(j-k)!}{m_{1}!1!^{m_{1}}m_{2}!2|^{m_{2}}\cdots
m_{j-k}!(j-k)!^{m_{j-k}}}
\\
&\qquad\qquad\qquad\cdot
\prod_{i=1}^{j-k}\left(\int_{0}^{t}k\delta h(s)F_{L,i}(t-s)ds\right)^{m_{i}},
\end{align*}
where the summation is over $1\cdot m_{1}+2\cdot m_{2}+\cdots+(j-k)\cdot m_{j-k}=j-k$.
\end{proof}

\section{Derivations of Equations~\eqref{eq:cond-prob} and \eqref{eq:cond-prob2}}

Let us compute for a non-negative integer $k,$
\begin{equation*}
\mathbb{P}(N(t,t+T]=k|N_t=1,\ell_{1}=l_{1}),
\end{equation*}
where $N(t,t+T] = N_{t+T}- N_t.$
Our strategy is to first compute the probability generating function
of $N(t,t+T]$ conditional on $N_t=1$ and $\ell_{1}=l_{1}$.

Note that on $[0,t]$, the first jump $\tau_{1}$
has the probability density function
$\mu(\tau_{1})e^{-\int_{0}^{\tau_{1}}\mu(s)ds}$. Conditional
on the time of the first jump $\tau_{1}$, $N_t=1$
if and only if $N(\tau_{1},t]=0$,
which occurs with probability $e^{-\int_{\tau_{1}}^{t}(\mu(s)+h(s-\tau_{1})l_{1})ds}$
conditional on $\ell_{1}=l_{1}$.
Next, notice that conditional on there is
only one jump on $[0,t]$ and the time of the first jump being $\tau_{1}$
and conditional on the first jump size being $l_{1}$, the stochastic process $N(t,t+s]$
as a function of $s\in[0,T]$, is a Hawkes process with an exciting function $h(\cdot)$, i.i.d. jump sizes $\ell_{i}$
and the time-dependent baseline intensity $\mu(s)+h(s-\tau_{1})l_{1}$ at time $t<s<t+T$.

Hence, from the discussions above and the probability generating functions
we derived in \eqref{z1} and \eqref{z2} in the proof of Proposition~\ref{prop:mass},
we conclude that,
for any $z\in\mathbb{C}$ with $0\leq |z|\leq 1$, the probability generating function is given by
\begin{align}
H(z)&:=\mathbb{E}\left[z^{N(t,t+T]}|N_t=1,\ell_{1}=l_{1}\right]
\nonumber
\\
&=\frac{\int_{0}^{t}\mu(\tau_{1})e^{-\int_{0}^{\tau_{1}}\mu(s)ds}e^{-\int_{\tau_{1}}^{t}(\mu(s)+h(s-\tau_{1})l_{1})ds}
e^{\int_{t}^{t+T}(\mu(s)+h(s-\tau_{1})l_{1})(F_{N}(T+t-s)-1)ds}d\tau_{1}}
{\int_{0}^{t}\mu(\tau_{1})e^{-\int_{0}^{\tau_{1}}\mu(s)ds}e^{-\int_{\tau_{1}}^{t}(\mu(s)+h(s-\tau_{1})l_{1})ds}d\tau_{1}},
\label{pgfN}
\end{align}
where for any $0\leq t\leq T$,
\begin{equation*}
F_{N}(t)=z\mathbb{E}\left[e^{\int_{0}^{t}\ell_{1}h(s)(F_{N}(t-s)-1)ds}\right].
\end{equation*}
The probability generating function yields
\begin{equation*}
\mathbb{E}\left[z^{N(t,t+T]}|N_t=1,\ell_{1}=l_{1}\right]
=\sum_{k=0}^{\infty}z^{k}\mathbb{P}(N(t,t+T]=k|N_t=1,\ell_{1}=l_{1}),
\end{equation*}
and hence the Taylor expansion coefficient
of this generating function gives the probability mass function we need.
Hence, we can compute that
\begin{align*}
\mathbb{P}(N(t,t+T]=0|N(t)=1,\ell_{1}=l_{1})
&=H(0)=\frac{\int_{0}^{t}\mu(\tau_{1})e^{-\int_{0}^{\tau_{1}}\mu(s)ds}e^{-\int_{\tau_{1}}^{t+T}(\mu(s)+h(s-\tau_{1})l_{1})ds}
d\tau_{1}}
{\int_{0}^{t}\mu(\tau_{1})e^{-\int_{0}^{\tau_{1}}\mu(s)ds}e^{-\int_{\tau_{1}}^{t}(\mu(s)+h(s-\tau_{1})l_{1})ds}d\tau_{1}},
\end{align*}
and
\begin{align*}
&\mathbb{P}(N(t,t+T]=1|N(t)=1,\ell_{1}=l_{1})
\\
&=H'(0)=\left(\int_{0}^{t}\mu(\tau_{1})e^{-\int_{0}^{\tau_{1}}\mu(s)ds}
e^{-\int_{\tau_{1}}^{t}(\mu(s)+h(s-\tau_{1})l_{1})ds}d\tau_{1}\right)^{-1}
\\
&\qquad
\cdot\int_{0}^{t}\mu(\tau_{1})e^{-\int_{0}^{\tau_{1}}\mu(s)ds}e^{-\int_{\tau_{1}}^{t}(\mu(s)+h(s-\tau_{1})l_{1})ds}
e^{-\int_{t}^{t+T}(\mu(s)+h(s-\tau_{1})l_{1})ds}
\\
&\qquad\cdot
\int_{t}^{t+T}(\mu(s)+h(s-\tau_{1})l_{1})\frac{\partial}{\partial z}F_{N}(T+t-s)\bigg|_{z=0}dsd\tau_{1}
\\
&=\left(\int_{0}^{t}\mu(\tau_{1})e^{-\int_{0}^{\tau_{1}}\mu(s)ds}e^{-\int_{\tau_{1}}^{t}(\mu(s)+h(s-\tau_{1})l_{1})ds}d\tau_{1}\right)^{-1}
\\
&\qquad
\cdot\int_{0}^{t}\bigg[\mu(\tau_{1})e^{-\int_{0}^{\tau_{1}}\mu(s)ds}e^{-\int_{\tau_{1}}^{t+T}(\mu(s)+h(s-\tau_{1})l_{1})ds}
\\
&\qquad\qquad\qquad\qquad
\cdot
\int_{t}^{t+T}(\mu(s)+h(s-\tau_{1})l_{1})\mathbb{E} \left[e^{-\int_{0}^{T+t-s} \ell_1 h(u)du}\right]ds\bigg]d\tau_{1}.
\end{align*}


\section{Derivations of \eqref{321}}
We provide a direct proof for \eqref{321}.  It is obvious that $\mathbb{E}[\sigma^{(1)}_x] = \lceil x \rceil$, since $\sigma^{(1)}_x$ is the hitting time to level $x>0$ for a Poisson process with rate one. Hence, it suffices to show 
\begin{equation}\label{32}
\mathbb{E}[\sigma^{(3)}_x]  > \mathbb{E}[\sigma^{(2)}_x] = x +1 \quad \text{for $x > 0,$}
\end{equation}
where $\mathbb{E}[\sigma^{(i)}_x]$, $i=2,3$, are the expected hitting time to level $x>0$ for
the compound Poisson arrival with jump sizes $\ell_{i}^{(2)}$ being exponential and $\ell_{i}^{(3)}$ being hyper-exponential respectively. Let us write the density function of $\ell_{i}^{(3)}$  as
\begin{equation*}
c\lambda_{1}e^{-\lambda_{1}x}+(1-c)\lambda_{2}e^{-\lambda_{2}x},
\qquad
x>0,
\end{equation*}
where $0<c<1$ and $0 <\lambda_{1} <1< \lambda_{2}$
so that
\begin{equation*}
\mathbb{E}[\ell_{i}^{(3)}]
=\frac{c}{\lambda_{1}}+\frac{1-c}{\lambda_{2}}=1= \mathbb{E}[\ell_{i}^{(2)}].
\end{equation*}
Since the baseline intensity is one, we have $\{L_{t}^{(j)}- t: t \ge 0\}$ is a martingale for $j=2,3$, where $L_{t}^{(j)}$ is the point process with jump sizes $\ell_{i}^{(j)}$. Now we infer from
optional stopping theorem that
\begin{equation}\label{MEqn}
\mathbb{E}\left[\sigma_{x}^{(j)}\wedge M\right]=\mathbb{E}\left[L_{\sigma_{x}^{(j)}\wedge M}^{(j)}\right],
\qquad j=2,3,
\end{equation}
for any $M>0$. Note that $0\leq L_{\sigma_{x}^{(j)}\wedge M}^{(j)}\leq L_{\sigma_{x}^{(j)}}^{(j)}$.
By letting $M\rightarrow\infty$, we apply monotone convergence of the left hand side of
\eqref{MEqn} and dominated convergence theorem on the right hand side of \eqref{MEqn} and we get:
\begin{equation}\label{eq:E-sigma-x}
\mathbb{E}[\sigma_{x}^{(j)}]=x+\mathbb{E}\left[L_{\sigma_{x}^{(j)}}^{(j)}-x\right],
\qquad j=2,3,
\end{equation}
provided that $\mathbb{E}[L_{\sigma_{x}^{(j)}}^{(j)}]$ is finite for $j=2,3$.

Next, let us compute and estimate the expected overshoot $\mathbb{E}[L_{\sigma_{x}^{(j)}}^{(j)}-x]$. For $j=2,$ it is well known that for exponentially distributed $\ell_{i}^{(2)}$ with mean $1$, the overshoot is also exponentially distributed with mean $1$
and thus
\begin{equation}\label{eq:E-sigma2-x}
\mathbb{E}[\sigma_{x}^{(2)}]=x+\mathbb{E}\left[L_{\sigma_{x}^{(2)}}^{(2)}-x\right] = x+1.
\end{equation}
For $j=3,$ we note that for a hyper-exponentially distributed $\ell_{i}^{(3)}$ with mean $1$,
we can compute that for any $0<z<x$,
\begin{align} \label{eq:overshott}
\mathbb{E}\left[L_{\sigma_{x}^{(3)} }^{(3)}-x\big|L_{\sigma_{x}^{(3)}- }^{(3)}=x-z\right]
&=\mathbb{E}\left[\ell_{1}^{(3)}\big|\ell_{1}^{(3)}>z\right] \nonumber
\\
&=\frac{\int_{z}^{\infty}c\lambda_{1}(y-z)e^{-\lambda_{1}y}dy+\int_{z}^{\infty}(1-c)(y-z)\lambda_{2}e^{-\lambda_{2}y}dy
}{\int_{z}^{\infty}c\lambda_{1}e^{-\lambda_{1}y}dy+\int_{z}^{\infty}(1-c)\lambda_{2}e^{-\lambda_{2}y}dy}
\nonumber
\\
&=\frac{c\frac{1}{\lambda_{1}}e^{-\lambda_{1}z}+(1-c)\frac{1}{\lambda_{2}}e^{-\lambda_{2}z}}
{ce^{-\lambda_{1}z}+(1-c)e^{-\lambda_{2}z}}.
\end{align}
Notice that
\begin{equation*}
\frac{c\frac{1}{\lambda_{1}}e^{-\lambda_{1}z}+(1-c)\frac{1}{\lambda_{2}}e^{-\lambda_{2}z}}
{ce^{-\lambda_{1}z}+(1-c)e^{-\lambda_{2}z}}
\leq\frac{1}{\lambda_{1}}+\frac{1}{\lambda_{2}},
\end{equation*}
uniformly in $0<z<x$ and thus $\mathbb{E}[L_{\sigma_{x}^{(3)}}^{(3)}]\leq\frac{1}{\lambda_{1}}+\frac{1}{\lambda_{2}}+x<\infty$
is finite.
Moreover,
\begin{equation*}
\frac{c\frac{1}{\lambda_{1}}e^{-\lambda_{1}z}+(1-c)\frac{1}{\lambda_{2}}e^{-\lambda_{2}z}}
{ce^{-\lambda_{1}z}+(1-c)e^{-\lambda_{2}z}}
> 1,
\end{equation*}
if and only if
\begin{equation}\label{IneqHold}
c\left(\frac{1}{\lambda_{1}}-1\right)e^{-\lambda_{1}z}
>(1-c)\left(1-\frac{1}{\lambda_{2}}\right)e^{-\lambda_{2}z}.
\end{equation}
Since $\lambda_{2}>\lambda_{1}$ and $z>0$, the strict inequality \eqref{IneqHold} holds if we can show that
\begin{equation}\label{IneqHold2}
\frac{c}{\lambda_{1}}-c\geq 1-c-\frac{1-c}{\lambda_{2}},
\end{equation}
This holds and indeed we get the equality in \eqref{IneqHold2} due to $\mathbb{E}[\ell_{i}^{(3)}]=1$.
Hence, we can infer from \eqref{eq:overshott} that $\mathbb{E}[L_{\sigma_{x}^{(3)} }^{(3)}-x] >1$ when the jump size is hyper-exponentially distributed.
On combining with \eqref{eq:E-sigma-x} and \eqref{eq:E-sigma2-x}, we obtain \eqref{32}.


\begin{thebibliography}{99} 

\bibitem[\protect\citeauthoryear{Abate and Whitt}{1995}]{AbateWhitt95}
Abate, J. and Whitt, W.,
Numerical inversion of Laplace transforms of probability distributions.
\textit{ORSA Journal on Computing}, 1995, \textbf{7}, 36-43.

\bibitem[\protect\citeauthoryear{Abergel and Jedidi}{2015}]{AJ}
Abergel, F. and Jedidi, A.,
Long time behaviour of a Hawkes process-based limit order book.
\textit{SIAM J. Financial Math.}, 2015, \textbf{6}, 1026-1043.

\bibitem[\protect\citeauthoryear{Afeche {\itshape{et al.}}}{2014}]{Afeche2014}
Afeche, P., Diamant, A. and Milner, J.,
Double-sided batch queues with abandonment: Modeling crossing networks.
\textit{Operations Research}., 2014, \textbf{62}, 1179-1201.

\bibitem[\protect\citeauthoryear{Almgren and Harts}{2017}]{Almgren}
Almgren, R. and Harts, W.,
A dynamic algorithm for smart order routing. Tech. Rep., StreamBase, 2008.
Available at \url{http://complexevents.com/wp-content/uploads/2008/09/streambase_whitepaper_smart_order_routing.pdf}
(accessed 5 July 2017).

\bibitem[\protect\citeauthoryear{Bacry and Muzy}{2014}]{BacryMuzy2014}
Bacry, E. and Muzy, J.F.,
Hawkes model for price and trades high-frequency dynamics.
\textit{Quantitative Finance}, 2014, \textbf{14}, pp.1147-1166.

\bibitem[\protect\citeauthoryear{Bacry {\itshape{et al.}}}{2015}]{Bacry2015}
Bacry, E., Mastromatteo, I., and Muzy, J.F.,
Hawkes processes in finance.
\textit{Market Microstructure and Liquidity}, 2015, \textbf{01}, 1550005.

\bibitem[\protect\citeauthoryear{Bowsher}{2007}]{Bowsher2007}
Bowsher, C.G.,
Modelling security market events in continuous time: Intensity based, multivariate point process models.
\textit{Journal of Econometrics}, 2007, \textbf{141}, 876-912.

\bibitem[\protect\citeauthoryear{Br\'{e}maud and Massouli\'{e}}{2002}]{Bremaud2002}
Br\'{e}maud, P. and Massouli\'{e}, L.,
Power spectra of general shot noises and Hawkes point processes with a random excitation.
\textit{Advances in Applied Probability}, 2002, \textbf{34}, 205-222.

\bibitem[\protect\citeauthoryear{Brunner}{2004}]{Brunner2004}
Brunner, H.,
\textit{Collocation Methods for Volterra Integral and Related Functional Differential Equations}, 2004.
(Cambridge University Press, Cambridge, UK).

\bibitem[\protect\citeauthoryear{Buti {\itshape{et al.}}}{2011}]{Buti2011}
Buti, S., Rindi, B. and Werner, I.M.,
Dark pool trading strategies, market quality and welfare.
forthcoming \textit{Journal of Financial Economics}, 2017.

\bibitem[\protect\citeauthoryear{Cai}{2009}]{Cai2009}
Cai, N.,
On first passage times of a hyper-exponential jump diffusion process.
\textit{Operations Research Letters}, 2009, \textbf{37}, 127-134.

\bibitem[\protect\citeauthoryear{Cai and Kou}{2011}]{CaiKou2011}
Cai, N. and Kou, S.G.,
Option pricing under a mixed-exponential jump diffusion model.
\textit{Management Science}, 2011, \textbf{57}, 2067-2081.

\bibitem[\protect\citeauthoryear{Carr and Madan}{1999}]{CarrMadan}
Carr, P. and Madan, D. Option valuation using the fast Fourier transform. \textit{Journal of Computational Finance}, 1999, 2(4), pp.61-73.

\bibitem[\protect\citeauthoryear{Cartea {\itshape{et al.}}}{2014}]{Cartea2014}
Cartea, \'{A}., Jaimungal, S. and Ricci, J.,
Buy low, sell high: A high frequency trading perspective.
\textit{SIAM Journal on Financial Mathematics}, 2014, \textbf{5}, 415-444.

\bibitem[\protect\citeauthoryear{Daley and Vere-Jones}{2003}]{Daley}
Daley, D. J. and Vere-Jones, D.,
\textit{An Introduction to the Theory of Point Processes}, 2003.
(Springer-Verlag, New York, NY).

\bibitem[\protect\citeauthoryear{Dassios and Zhao}{2011}]{Dassios2011}
Dassios, A. and Zhao, H.,
A dynamic contagion process.
\textit{Advances in Applied Probability}, 2011, \textbf{43}, 814-846.

\bibitem[\protect\citeauthoryear{Duffie {\itshape{et al.}}}{2000}]{Duffie}
Duffie, D., Pan, J. and Singleton, K.,
Transform analysis and asset pricing for affine jump-diffusions.
\textit{Econometrica}, 2000, \textbf{68}, 1343-1376.

\bibitem[\protect\citeauthoryear{Errais {\itshape{et al.}}}{2010}]{Errais}
Errais, E., Giesecke, K. and Goldberg, L.,
Affine point processes and portfolio credit risk.
\textit{SIAM J. Financial Math}. 2010, \textbf{1}, 642-665.

\bibitem[\protect\citeauthoryear{Euch and Rosenbaum}{2016}]{Euch2016}
Euch, O.E. and Rosenbaum, M.,
The characteristic function of rough Heston models.
\textit{arXiv:1609.02108.}, 2016.

\bibitem[\protect\citeauthoryear{Ganchev {\itshape{et al.}}}{2010}]{Ganchev2010}
Ganchev, K., Nevmyvaka, Y., Kearns, M. and Vaughan, J.W.,
Censored exploration and the dark pool problem.
\textit{Communications of the ACM}, 2010, \textbf{53}, 99-107.

\bibitem[\protect\citeauthoryear{Gatheral and Schied}{2013}]{gatheral2013}
Gatheral, J. and Schied, A.,
Dynamical models of market impact and algorithms for order execution.
In \textit{Handbook On Systemic Risk}, edited by J.-P. Fouque and J. A. Langsam,
pp.579-599, 2013 (Cambridge University Press: Barcelona, Spain).

\bibitem[\protect\citeauthoryear{Hawkes}{1971a}]{Hawkes}
Hawkes, A. G.,
Spectra of some self-exciting and mutually exciting point processes.
\textit{Biometrika}, 1971, \textbf{58}, 83-90.

\bibitem[\protect\citeauthoryear{Hawkes}{1971b}]{Hawkes71II}
Hawkes, A. G.,
Point spectra of some mutually exciting point processes.
\textit{J. R. Statist. Soc.} B, 1971, \textbf{33}, 438-443.

\bibitem[\protect\citeauthoryear{Hawkes and Oakes}{1974}]{HawkesII}
Hawkes, A. G. and Oakes, D.,
A cluster process representation of a self-exciting process.
\textit{J. Appl. Prob.}, 1974, \textbf{11}, 493-503.

\bibitem[\protect\citeauthoryear{Hendershott and Mendelson}{2000}]{Hendershott2000}
Hendershott, T. and Mendelson, H.,
Crossing networks and dealer markets: competition and performance.
\textit{Journal of Finance}, 2000, \textbf{55}, 2071-2115.

\bibitem[\protect\citeauthoryear{Iyer {\itshape{et al.}}}{2015}]{Iyer2015}
Iyer, K., Johari, R. and Moallemi, C.C.,
Welfare analysis of dark pools.
Available at SSRN 2040959, 2015.

\bibitem[\protect\citeauthoryear{Jaisson and Rosenbaum}{2015}]{Jaisson}
Jaisson, T. and Rosenbaum. M.,
Limit theorems for nearly unstable Hawkes processes.
\textit{Annals of Applied Probability}, 2015, \textbf{25}, 600-631.

\bibitem[\protect\citeauthoryear{Karabash and Zhu}{2015}]{Karabash}
Karabash, D. and Zhu, L.,
Limit theorems for marked Hawkes processes with application to a risk model.
\textit{Stochastic Models}, 2015, \textbf{31}, 433-451.

\bibitem[\protect\citeauthoryear{Klimko}{1973}]{Kilmko73}
Klimko, E.M.,
An algorithm for calculating indices in Fa\`{a} di Bruno's.
\textit{BIT Numerical Mathematics}, 1973, \textbf{13}, 38-49.

\bibitem[\protect\citeauthoryear{Kl\"{o}ck {\itshape{et al.}}}{2011}]{Klock2011}
Kl\"{o}ck, F., Schied, A. and Sun, Y.S.,
Price manipulation in a market impact model with dark pool.
Available at SSRN 1785409, 2011.

\bibitem[\protect\citeauthoryear{Kobayashi and Lambiotte}{2016}]{KL}
Kobayashi, R. and Lambiotte, R.,
TiDeH: Time-dependent Hawkes process for predicting retweet dynamics.
\textit{Proceedings of the Tenth International AAAI Conference on Web and Social Media (ICWSM 2016)}, 2016.

\bibitem[\protect\citeauthoryear{Kratz and Sch\"{o}neborn}{2014}]{Kratz2014}
Kratz, P. and Sch\"{o}neborn, T.,
Optimal liquidation in dark pools.
\textit{Quantitative Finance}, 2014, \textbf{14}, 1519-1539.

\bibitem[\protect\citeauthoryear{Kratz and Sch\"{o}neborn}{2015}]{Kratz2015}
Kratz, P. and Sch\"{o}neborn, T.,
Portfolio liquidation in dark pools in continuous time.
\textit{Mathematical Finance}, 2015, \textbf{25}, 496-544.

\bibitem[\protect\citeauthoryear{Laruelle {\itshape{et al.}}}{2011}]{Laruelle2011}
Laruelle, S., Lehalle, C.A. and Pages, G.,
Optimal split of orders across liquidity pools: a stochastic algorithm approach.
\textit{SIAM Journal on Financial Mathematics}, 2011, \textbf{2}, 1042-1076.

\bibitem[\protect\citeauthoryear{Lehalle and Laruelle}{2013}]{Lehalle2013}
Lehalle, C.A. and Laruelle, S. eds.,
\textit{Market Microstructure in Practice}, 2013 (World Scientific: Singapore).

\bibitem[\protect\citeauthoryear{Lee {\itshape{et al.}}}{2016}]{Lee2016}
Lee, Y., Lim, K.W. and Ong, C.S.,
Hawkes processes with stochastic excitations.
\textit{Proceedings of The 33rd International Conference on Machine Learning (ICML)}, pp. 79-88, 2016.

\bibitem[\protect\citeauthoryear{LiquidMetrix}{2016}]{Liquidmetrix}
LiquidMetrix Guide to European Dark Pools, 2016.
Available online at \url{http://www.liquidmetrix.com/LiquidMetrix/DarkPoolGuides.}
(accessed 5 July 2017).

\bibitem[\protect\citeauthoryear{Markov and Ingargiola}{2013}]{Markov2013}
Markov, V. and Ingargiola, T.,
Block-Crossing Networks and The Value Of Natural Liquidity.
\textit{The Journal of Trading}, 2013, \textbf{8}, 16-39.

\bibitem[\protect\citeauthoryear{Mittal}{2008}]{Mittal2008}
Mittal, H.,
Are you playing in a toxic dark pool? A guide to preventing information leakage.
\textit{The Journal of Trading}, 2008, \textbf{3}, 20-33.

\bibitem[\protect\citeauthoryear{Mittal and Taur}{2007}]{Mittal2007}
Mittal, H. and Taur, R.,
Maintaining Trade List Structure Across Dark Pools: The Right Aggregating Algorithm Makes all the Difference.
\textit{The Journal of Trading}, 2007, \textbf{2007}, 84-88.

\bibitem[\protect\citeauthoryear{Roueff {\itshape{et al.}}}{2016}]{RSS}
Roueff, F., von Sachs, R. and Sansonnet, L.,
Locally stationary Hawkes processes.
\textit{Stochastic Processes and their Applications}, 2016, \textbf{126}, 1710-1743.

\bibitem[\protect\citeauthoryear{Shaked and Shanthikumar}{2007}]{Shaked2007}
Shaked, M. and Shanthikumar, J.G.,
\textit{Stochastic Orders}, 2007
(Springer Science \& Business Media: Vancouver).	

\bibitem[\protect\citeauthoryear{Muni Toke and Pomponio}{2012}]{Toke}
Muni Toke, I. and Pomponio, F.,
Modelling trades-through in a limit order book using Hawkes processes.
\textit{Economics}, 2012, \textbf{6}, 1-23.

\bibitem[\protect\citeauthoryear{Ye}{2011}]{Ye2011}
Ye, M.,
A glimpse into the dark: Price formation, transaction cost and market share of the crossing network.
Available at SSRN 1521494, 2011.

\bibitem[\protect\citeauthoryear{Zhang {\itshape{et al.}}}{2015}]{Zhang2015}
Zhang, X., Blanchet, J., Giesecke, K., and Glynn, P. W.,
Affine point processes: Approximation and efficient simulation.
\textit{Mathematics of Operations Research}, 2015, \textbf{40}, 797-819.

\bibitem[\protect\citeauthoryear{Zhu}{2014}]{ZhuHX2014}
Zhu, H.,
Do dark pools harm price discovery?
\textit{Review of Financial Studies}, 2014, \textbf{27}, 747-789.

\bibitem[\protect\citeauthoryear{Zhu}{2013}]{ZhuThesis}
Zhu, L.,
Nonlinear Hawkes Processes.
PhD Thesis, New York University, 2013.
\end{thebibliography}
\end{document}